\documentclass[lettersize,journal]{IEEEtran}
\IEEEoverridecommandlockouts

\usepackage{amssymb}
\usepackage[cmex10]{amsmath}
\usepackage{stfloats}
\usepackage{graphicx}
\usepackage{subfigure}
\usepackage{tabularx}
\usepackage{epsfig,epsf,color,balance,cite}
\usepackage{verbatim}
\usepackage{url}
\usepackage{bm}
\usepackage{booktabs}
\usepackage{siunitx}

\usepackage{amsmath}
\allowdisplaybreaks[2]
\usepackage{ntheorem}
\newtheorem{theorem}{Theorem}
\newtheorem{lemma}{Lemma}
\newtheorem*{proof}{Proof}

\usepackage{algorithm}
\usepackage{algorithmic}

\usepackage[caption=false,font=footnotesize]{subfig}

\usepackage{color}
\definecolor{myc1}{rgb}{0,0,0}

\begin{document}

\title{Digital Twin Synchronization Over Mobile Embodied AI Network With Agentic Intelligence}

\author{Zhouxiang Zhao,
        Jiaxiang Wang,
        Yahao Ding,
        Yinchao Yang,
        Zhaohui Yang,
        Mohammad Shikh-Bahaei,\\
        Julie A. McCann,
        Zhaoyang Zhang,~\IEEEmembership{Senior Member,~IEEE,}
        and Kaibin Huang,~\IEEEmembership{Fellow,~IEEE}
        
\thanks{Zhouxiang Zhao, Zhaohui Yang, and Zhaoyang Zhang are with the College of Information Science and Electronic Engineering, Zhejiang University, Hangzhou 310027, China (e-mails: \{zhouxiangzhao, yang\_zhaohui, ning\_ming\}@zju.edu.cn).}
\thanks{Jiaxiang Wang, Yahao Ding, and Mohammad Shikh-Bahaei are with the Department of Engineering, King's College London, London, WC2R 2LS, UK (e-mails: \{jiaxiang.wang, yahao.ding, m.sbahaei\}@kcl.ac.uk).}
\thanks{Yinchao Yang and Julie A. McCann are with the Department of Computing, Imperial College London, SW7 2AZ London, U.K. (e-mails: \{yinchao.yang, j.mccann\}@imperial.ac.uk).}
\thanks{Kaibin Huang is with the Department of Electrical and Electronic Engineering, The University of Hong Kong, Hong Kong SAR, China (e-mail: huangkb@eee.hku.hk).}
}

\maketitle

\begin{abstract}

Efficient digital twin (DT) synchronization relies on maintaining high-fidelity virtual representations with minimal age of information (AoI). However, the synergistic potential of cooperative sensing and autonomous mobility of the sensing agent remains underexplored in existing DT synchronization frameworks. In this paper, we propose an agentic AI-empowered mobile embodied AI network (MEAN) framework for DT synchronization. In the proposed hybrid architecture, the base station (BS) conducts global orchestration, while the agents autonomously execute a five-stage closed-loop workflow: move-to-sense, cooperative sensing, onboard semantic processing, channel-aware mobility, and uplink transmission. To optimize synchronization performance, we formulate a joint topology dispatching and multidimensional resource allocation problem aimed at minimizing the maximum twin deviation across regions, subject to heterogeneous sensing fidelity and energy budget constraints. To tackle this, we develop a hierarchical two-layer optimization algorithm, where the outer-layer refines multi-agent assignment via a dynamic matching game, and the inner-layer iteratively optimizes the continuous resources. Extensive simulation results verify the convergence of the proposed algorithm and demonstrate its substantial superiority over multiple baseline schemes in reducing synchronization deviation. Furthermore, the results reveal that semantic compression serves as a vital substitute for channel resources in latency reduction under constrained bandwidth, while autonomous velocity adaptation provides an essential degree of freedom for the system to navigate the fundamental energy-time trade-off.
\end{abstract}

\begin{IEEEkeywords}
Digital twins, embodied AI, multi-agent collaboration, semantic communications, age of information.
\end{IEEEkeywords}

\IEEEpeerreviewmaketitle

\section{Introduction}
\IEEEPARstart{T}{he} rapid evolution of digital twin (DT) is transforming how complex physical systems are monitored, analyzed, and controlled across a wide range of domains, including smart manufacturing, healthcare, security, and wireless networks \cite{9899718}. Unlike conventional simulation tools that rely on static models and predefined assumptions, a DT operates as a living model that continuously integrates real-time data from the physical world \cite{11207697}. This tight coupling allows DTs to accurately reflect system states, anticipate future behaviors, and support adaptive optimization, thereby unlocking significant improvements in efficiency, reliability, and autonomy.

Recent advances in sensing technologies, particularly cooperative and distributed sensing through mobile embodied AI networks (MEANs), have substantially enhanced the fidelity of DTs by providing rich, multi-source observations across spatially distributed environments \cite{chen2026split,wu2026joint,9921194}. In parallel, modern communication technologies enable MEANs to transmit large volumes of data to edge or cloud platforms with low latency and high reliability \cite{ZHAO2024107055,10841377,11006980}. Building upon this sensing and communication foundation, embodied AI and agentic intelligence further enhance DT construction and maintenance by enabling MEANs to autonomously adapt their sensing, communication, data processing, and decision-making strategies, thereby enhancing overall DT fidelity and scalability. 

\subsection{Related Works}

\subsubsection{Digital Twin Synchronization} 

DT synchronization involves the timely alignment of the virtual representation with the current state of the physical source \cite{zhang2025transferring}. 
Maintaining tight synchronization is critical, as outdated or inconsistent state information can degrade DT fidelity. Several studies have investigated DT synchronization from a resource allocation perspective, including selection of access networks \cite{10070572}, selection of protocols \cite{cakir2023synchronize}, uplink transmission scheduling \cite{yu2025optimizing}, spectrum allocation \cite{elloumi2025spectrum}, and optimization of computational resources and energy \cite{liu2024two}.

Complementary to resource allocation approaches, other works focus on reducing the data volume required for DT updates through semantic compression techniques. For example, \cite{10530992} and \cite{li2025delay} exploit semantic communication to encode sensing data into compact representations that preserve essential meaning and features. By transmitting semantically relevant information instead of raw data, these methods significantly reduce uplink latency, thereby facilitating more efficient and timely DT synchronization.

\subsubsection{Age of Information} 

Age of information (AoI) quantifies the timeliness of an orchestrator’s knowledge of a remote entity or process, by measuring the elapsed time since the most recently received update \cite{9380899}. In DT systems, maintaining low AoI, together with high-quality sensing, is critical for ensuring accurate and up-to-date state awareness of the physical system. A fundamental trade-off arises whereby longer sampling and observation intervals improve data quality and estimation accuracy but inevitably increase AoI.

Recent works have investigated AoI optimization in DT systems to enable high-fidelity state awareness. In particular, various approaches have been proposed to manage bandwidth \cite{10000964, liao2024integration}, energy \cite{guo2024age, liu2025joint}, and computational resources \cite{11183623, 11004606}, to constrain AoI within stringent thresholds. Furthermore, \cite{10460140} examines the intricate interplay between sensing quality, sensing cost, and sensing redundancy, highlighting trade-offs between temporal freshness and data acquisition efficiency.

\subsubsection{Embodied Agentic AI}

Embodied agentic AI integrates intelligence into physical platforms, enabling systems to perceive, interact with, and navigate their environments by harnessing sensing, communication, computation, and actuation capabilities. Such agents can autonomously pursue complex objectives with minimal human intervention \cite{zhao2025agentic,11370176, 11339915}.

Existing studies have investigated several aspects of agentic AI, including its integration with edge computing infrastructures \cite{11373363,11366905}, the design of agent-to-agent (A2A) interaction and coordination mechanisms \cite{11303308,11373008}, and the development of multi-modal perception techniques \cite{11297177,11298134}. In parallel, prior work has examined a broad range of application scenarios enabled by agentic AI \cite{11207716,zhao2026agentic}, as well as the incorporation of advanced large language models (LLMs) to support reasoning and decision-making \cite{11303197,11022699}.

\subsection{Contributions}
Despite the extensive efforts in the aforementioned areas, critical research gaps remain. First, existing DT synchronization schemes primarily focus on static resource allocation or data compression, largely overlooking the synergistic potential of multi-agent cooperative sensing and autonomous mobility. Second, current AoI-oriented studies seldom consider how embodied agents can actively reshape channel conditions through spatial exploration to mitigate transmission bottlenecks. Third, while agentic AI has demonstrated remarkable potential in high-level reasoning, there lacks a quantitative optimization framework that harnesses its capabilities for physical-virtual synchronization in dynamic communication networks.

To address these research gaps, our work integrates cooperative sensing, onboard semantic processing, and autonomous channel-aware mobility to explicitly navigate the energy-time trade-off in DT systems. 
Our main contributions are summarized as follows:
\begin{itemize}
    \item We propose a comprehensive MEAN framework tailored for DT synchronization, which is characterized by a five-stage closed-loop workflow: move-to-sense, cooperative sensing, onboard semantic processing, channel-aware mobility, and uplink transmission. In this hybrid framework, the base station (BS) acts as a central orchestrator, while the distributed agents autonomously adapt their agentic intelligence in the physical execution layer to dynamically evolving environments.
    \item We formulate a cooperative sensing model that quantitatively characterizes the non-linear relationship between collaborative sensing duration and achieved perception accuracy, which is empirically validated utilizing real-world datasets. Furthermore, we establish a channel-aware autonomous mobility model that mathematically captures the statistical spatial diversity gain achieved through active environmental exploration.
    \item To holistically evaluate the synchronization performance, we formulate a mixed-integer non-linear programming (MINLP) problem aimed at minimizing the maximum twin deviation, subject to stringent sensing fidelity and energy budget constraints. To achieve efficient resource scheduling and optimization, we develop a hierarchical two-layer algorithm. The outer-layer refines the multi-agent assignment via a dynamic matching game with heuristic guidance, while the inner-layer optimizes the continuous resources iteratively.
\end{itemize}

The remainder of this paper is organized as follows. Section \ref{Sec:smpf} establishes the MEAN-enabled DT synchronization framework and formulates the twin deviation minimization problem. Section \ref{Sec:ad} develops a hierarchical two-layer optimization algorithm. Section \ref{Sec:sra} presents comprehensive numerical results and performance analysis to validate the proposed scheme. Finally, Section \ref{Sec:c} concludes the paper.


\section{System Model and Problem Formulation}\label{Sec:smpf}
In this section, we formulate the mathematical model for the proposed DT-native MEAN.

\begin{figure}[t]
    \centering
    \includegraphics[width=\linewidth]{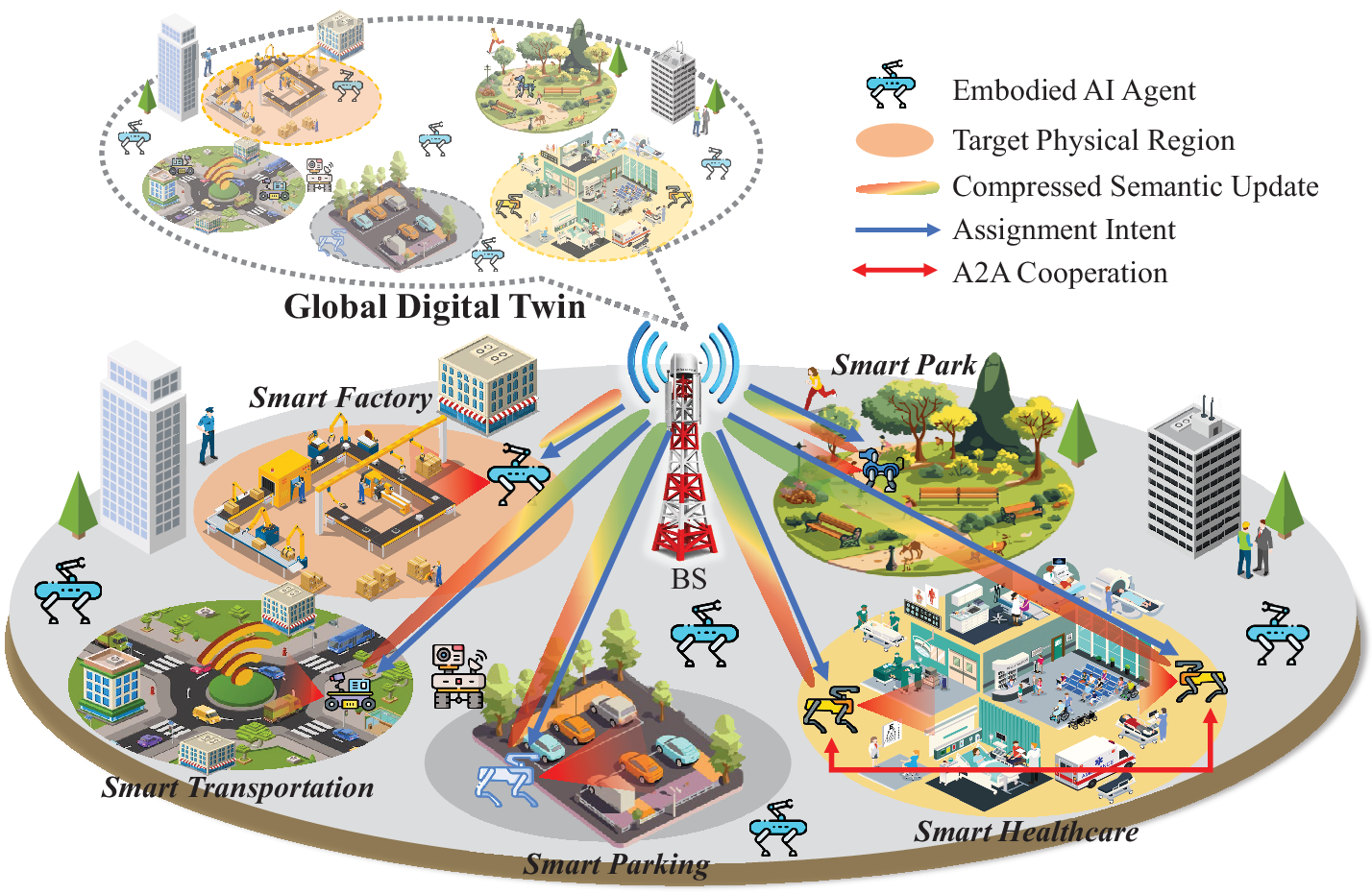}
    \caption{An illustration of the considered DT synchronization scenario over agentic AI-empowered MEAN.}
    \label{fig.scenario}
\end{figure}

\subsection{Scenario Overview}
Consider an agentic AI-empowered MEAN deployed in a dynamic physical environment, comprising a BS and a set of ground agents, denoted by $\mathcal{N} = \{1, 2, \dots, N\}$. 
The BS serves as a central orchestrator that maintains a global DT, as illustrated in Fig.~\ref{fig.scenario}. 
The system operates under a centralized orchestration and decentralized execution paradigm. The BS acts as the central brain, while the agents exhibit decentralized agentic autonomy in the physical layer.


To accurately capture the real-world dynamics, the DT must be synchronized with the physical entities. However, due to the inherent physical volatility, the virtual states in the DT inevitably deviate from the physical reality over time. Once the DT identifies that the twin deviation of certain critical areas exceeds a tolerable threshold, an on-demand update request is triggered. Let $\mathcal{K} = \{1, 2, \dots, K\}$ denote the set of physical regions targeted for the current DT update. In a three-dimensional (3D) Cartesian coordinate system, the BS is elevated at a fixed location $\mathbf{x}_{\text{BS}} = [0, 0, H]^\text{T}$. Each target region $k \in \mathcal{K}$ is geometrically modeled as a circular coverage area on the horizontal plane, characterized by its center $\mathbf{c}_k = [x_k, y_k, 0]^\text{T}$ and radius $r_k$.

When an update is triggered, the agents, which are initially dispersed at coordinates $\mathbf{x}_n^{\text{init}} = [x_n^{\text{init}}, y_n^{\text{init}}, 0]^\text{T}$ performing their routine tasks, are dynamically drafted by the BS. To coordinate this multi-agent dispatching, we define a binary assignment indicator $\alpha_{n,k} \in \{0, 1\}$, where $\alpha_{n,k} = 1$ indicates that agent~$n$ is dispatched to update region $k$, and $\alpha_{n,k} = 0$ otherwise.


\subsection{Move-to-Sense}
Upon receiving the dispatch intent from the DT orchestrator, the assigned agent $n$ (i.e., $\alpha_{n,k} = 1$) is required to autonomously navigate from its initial coordinate $\mathbf{x}_n^{\text{init}}$ to the effective sensing coverage of the target region $k$.

Without loss of generality, we assume the agent initiates the sensing task when it enters the specific region. Depending on the initial spatial distribution, we uniformly formulate the actual mobility distance $d_{n,k}^\text{ms}$ required for agent $n$ to reach the sensing region~$k$ as follows:
\begin{equation}
    d_{n,k}^\text{ms} = \max \left\{ 0, \lVert \mathbf{x}_n^{\text{init}} - \mathbf{c}_k \rVert_2 - r_k \right\},
\end{equation}
which handles the scenario where the agent is already located within the target region, yielding $d_{n,k}^\text{ms} = 0$.

Let $v_n^{\text{ms}} \in (0, v_n^{\max}]$ denote the constant cruising speed of agent $n$ traveling towards region $k$, where $v_n^{\max}$ is the maximum mechanical speed limit. The mobility delay incurred in this phase is given by:
\begin{equation}
    t_{n,k}^\text{ms} = \frac{d_{n,k}^\text{ms}}{v_n^{\text{ms}}}.
\end{equation}

For ground mobile agents operating at relatively low speeds, the air resistance is negligible. The dominant mechanical power consumption stems from overcoming the friction of the ground and the internal heat loss of the motor \cite{1638342}. Consequently, we model the mechanical mobility power $P_n^{\text{ms}}$ as a polynomial function comprising the first-order and second-order terms of the velocity:
\begin{equation}
    P_n^{\text{ms}} = \lambda_1 v_n^{\text{ms}} + \lambda_2 (v_n^{\text{ms}})^2,
\end{equation}
where $\lambda_1 > 0$ and $\lambda_2 > 0$ are the hardware-specific kinetic coefficients related to the rolling resistance and the motor's internal impedance, respectively.

Based on the above power formulation, the total energy consumption for agent $n$ during the move-to-sense phase can be derived as:
\begin{equation}
    E_{n,k}^{\text{ms}} = P_n^{\text{ms}} t_{n,k}^\text{ms} = d_{n,k}^\text{ms} \left(\lambda_1 + \lambda_2 v_n^{\text{ms}}\right).
\end{equation}

\subsection{Cooperative Sensing for DT}
Once agent $n$ enters its assigned region $k$, it activates its onboard sensors to capture the physical state. In our DT framework, the virtual reconstruction of a physical region benefits from multi-agent collaboration. Let $N_k = \sum_{n=1}^N \alpha_{n,k}$ denote the total number of collaborative agents dispatched to region $k$. Notably, when $N_k > 1$, the agents do not redundantly sense the same target. Instead, they leverage decentralized A2A communication to dynamically allocate sensing tasks and coordinate their field of view. By intelligently assigning distinct physical objects or sub-areas within region $k$ to different agents, the multi-agent cluster achieves a comprehensive and distributed coverage. Subsequently, as these agents perform cooperative scanning from different perspectives, the DT orchestrator can exploit the spatial view orthogonality to patch the holistic 3D semantic model, e.g., resolving depth ambiguity, clearing occlusions, and fusing textures.

\begin{figure}[t]
    \centering
    \includegraphics[width=\linewidth]{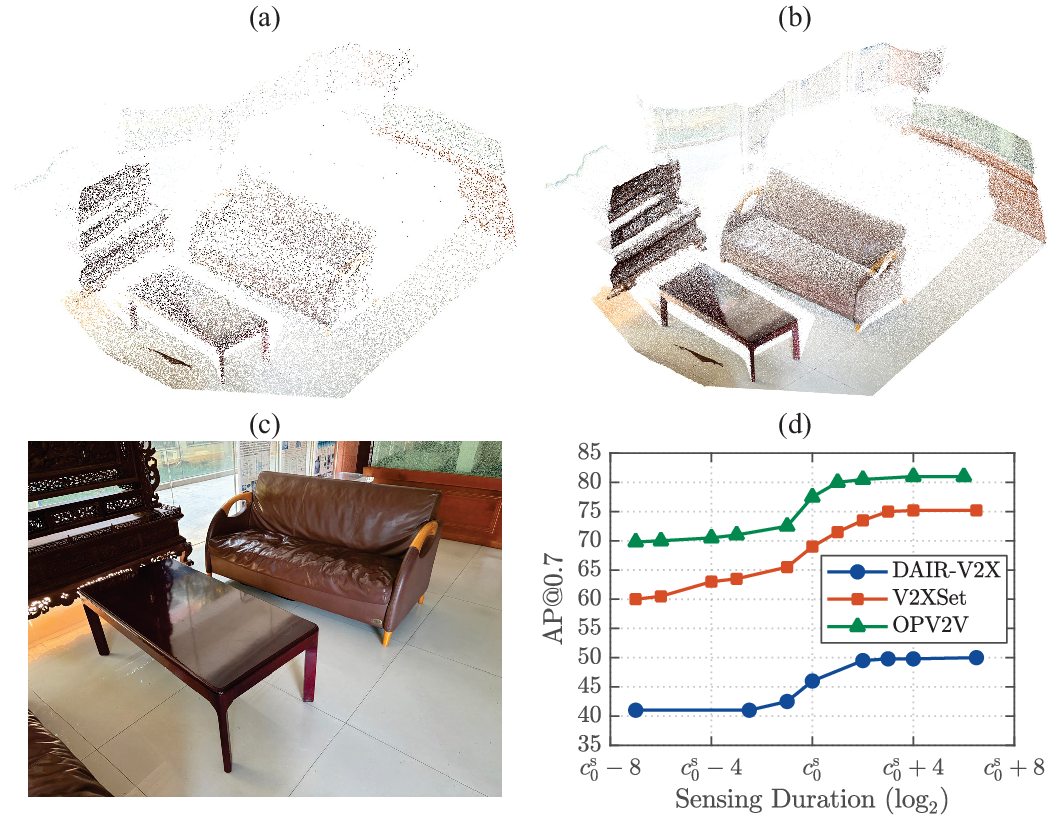}
    \caption{A visualization of sensing accuracy: (a) 3D reconstruction results based on short-duration sensing data, (b) 3D reconstruction results based on long-duration sensing data, (c) region setup, (d) average precision versus sensing duration across different datasets, including DAIR-V2X \cite{9879243}, V2XSet \cite{xu2022v2x}, and OPV2V \cite{9812038}.}
    \label{fig.sensing}
\end{figure}

To quantitatively characterize the relationship between the sensing duration and the achieved accuracy, we conduct preliminary perception simulation experiments. As illustrated in Fig.~\ref{fig.sensing}, the accuracy curves evaluated across diverse datasets consistently exhibit a characteristic sigmoid growth trend~\cite{yang2023spatio}. Furthermore, to capture the collaborative gain derived from the aforementioned A2A task allocation, we introduce a factor $\theta \in (0,1)$ and define the cooperative scaling function as $g(N_k) = N_k^\theta$ \cite{9812038}. By integrating this cooperative multiplier into the fitted sigmoid model, the equivalent sensing accuracy $q_{n,k}$ achieved by agent $n$ over a continuous sensing duration $t_n^{\text{s}}$ can be analytically formulated as:
\begin{equation}\label{eq.cs}
    q_{n,k} = \frac{\sigma}{1 + 2^{-\xi \left[\log_2\left(t_n^{\text{s}} N_k^\theta\right) - c_0^\text{s}\right]}},
\end{equation}
where $\sigma\in (0,1)$ is the model performance upper bound, $\xi > 0$ is the curve-fitted steepness parameter, and $c_0^\text{s}$ denotes the threshold where the sensing quality experiences a qualitative leap. Equation \eqref{eq.cs} explicitly reveals that dispatching more agents to a single region effectively reduces the individual sensing time required to achieve the same target accuracy.

Assuming a constant effective sampling rate $\gamma_n$ for the onboard sensors, the raw data volume $D_n$ acquired by agent $n$ is proportional to its sensing time:
\begin{equation}
    D_n = \gamma_n t_n^{\text{s}}.
\end{equation}

The energy consumed in the cooperative sensing phase can be given by:
\begin{equation}
    E_n^{\text{s}} = P_n^{\text{s}} t_n^{\text{s}},
\end{equation}
where $P_n^{\text{s}}$ represents the power consumed during the sensing phase.

\subsection{Compute while Moving}
Upon accomplishing the sensing task, the agent enters a coupled integrated computation and mobility phase. Specifically, it leverages its onboard computing resources to extract semantic features and compress the raw data. Simultaneously, the agent autonomously navigates its environment to identify a location with superior uplink channel conditions. This searching mobility is executed in parallel with the local computing to minimize the overall latency.

The considered DT-native framework can exploit the historical and spatial correlations inherent in the twin model. Let $Q_k \in [0,1)$ denote the prior prediction confidence of the global DT regarding the current physical state of region $k$. A higher $Q_k$ indicates that the DT possesses an accurate prior estimation, and thus the agent only needs to transmit the semantic residuals or state deviations.

To mathematically formalize this context-aware interaction, we define $\rho_{n,k} \in (0,1]$ as the semantic compression ratio of agent $n$ \cite{10550151}. The lower bound of $\rho_{n,k}$ is governed by the DT's prior confidence as follows:
\begin{equation}\label{eq.rholb}
    \rho_{n,k} \ge \rho_n^{\min} (1 - Q_k),
\end{equation}
where $\rho_n^{\min} \in (0,1)$ characterizes the fundamental compression limit of the local foundation model without any prior knowledge. Equation \eqref{eq.rholb} indicates that as the twin uncertainty $(1 - Q_k)$ decreases, the agent is permitted to adopt a more aggressive compression strategy, thereby alleviating the transmission burden.

The computation complexity for semantic extraction is modeled as $C(\rho_{n,k}) = \eta_n \ln(1/\rho_{n,k})$ operations per bit, where $\eta_n$ is a scaling parameter \cite{11208653,10915662}. Given the raw data volume $D_n$ acquired in the previous phase, the computation delay $t_{n,k}^{\text{c}}$ is formulated as:
\begin{equation}\label{eq.tc}
    t_{n,k}^{\text{c}} = \frac{D_n C(\rho_{n,k})}{f_n} = \frac{\gamma_n t_n^{\text{s}} \eta_n \ln(1/\rho_{n,k})}{f_n},
\end{equation}
where $f_n$ denotes the allocated computation capacity of agent~$n$.

Accordingly, the corresponding energy consumption for local computing is given by:
\begin{equation}\label{eq.ec}
    E_{n,k}^{\text{c}} = \kappa f_n^3 t_{n,k}^{\text{c}} = \kappa f_n^2 \gamma_n t_n^{\text{s}} \eta_n \ln\frac{1}{\rho_{n,k}},
\end{equation}
where $\kappa$ is the effective switched capacitance of the local processor.

Concurrently, the agent intelligently explores the physical environment to establish a reliable communication link. Let $t_n^{\text{mt}}$ and $v_n^{\text{mt}} \in (0, v_n^{\max}]$ denote the duration and cruising speed of this autonomous mobility phase, respectively. The energy consumed during this channel searching phase is:
\begin{equation}
    E_n^{\text{mt}} = P_n^\text{mt} t_n^{\text{mt}} = v_n^{\text{mt}} t_n^{\text{mt}} \left(\lambda_1 + \lambda_2 v_n^{\text{mt}}\right).
\end{equation}

Since the local computing and the autonomous mobility are executed in parallel, the effective duration $t_{n,k}^{\text{p}}$ of this integrated phase is determined by the longer of the two processes:
\begin{equation}\label{eq.tp}
    t_{n,k}^{\text{p}} = \max \left\{ t_{n,k}^{\text{c}}, t_n^{\text{mt}} \right\}.
\end{equation}
By the end of this parallel phase, the agent discovers an optimized transmission spot through its spatial exploration over the moving distance $d_n^{\text{mt}} = v_n^{\text{mt}} t_n^{\text{mt}}$, which will statistically enhance the subsequent transmission.

\subsection{Transmission Model}
Assume that the agent remains static while transmitting data. This stationary posture avoids the Doppler shifts and fast channel fading caused by mobility, thereby ensuring a highly reliable and stable uplink connection for uploading the compressed semantic payload to the DT orchestrator.

Due to the highly localized volatility of small-scale multipath fading, instantaneous channel conditions fluctuate significantly across space. However, environmental exploration improves the exploitable link quality by allowing an agent to proactively probe nearby positions and transmit from the most favorable one, effectively executing a search-and-select mechanism~\cite{Bonilla2016MDA}. Meanwhile, path loss-based signal maps and channel knowledge maps provide a natural reference for the location-dependent channel conditions of a region~\cite{Miyagusuku2018GPPathLoss,11414114}. Statistically, a larger exploration distance yields a larger spatial sample size, thereby increasing the probability of discovering a location with superior channel conditions. To mathematically capture this autonomous channel-searching capability, let $h_k^{\text{ref}} = \beta_0 (\lVert \mathbf{c}_k \rVert_2^2 + H^2)^{-\delta/2}$ denote the reference channel power gain from region $k$ to the BS, where $\beta_0$ represents the channel power gain at the reference distance of \SI{1}{m}, and $\delta$ is the path loss exponent. Driven by this search-and-select process, the expected maximum effective channel gain $h_{n,k}$ achieved by agent $n$ in region $k$ can be modeled as:
\begin{equation}
    h_{n,k} = h_k^{\text{ref}} + \phi_k(d_n^{\text{mt}}),
\end{equation}
where $\phi(\cdot) \ge 0$ represents the statistical spatial diversity gain expected from the active exploration. Without loss of generality, we model $\phi_k(x) = \omega_k \left(1 - e^{-x / \zeta_k}\right)$, where $\omega_k \ge 0$ and $\zeta_k > 0$ are environment-specific coefficients of region $k$.

\begin{figure}[t]
    \centering
    \includegraphics[width=\linewidth]{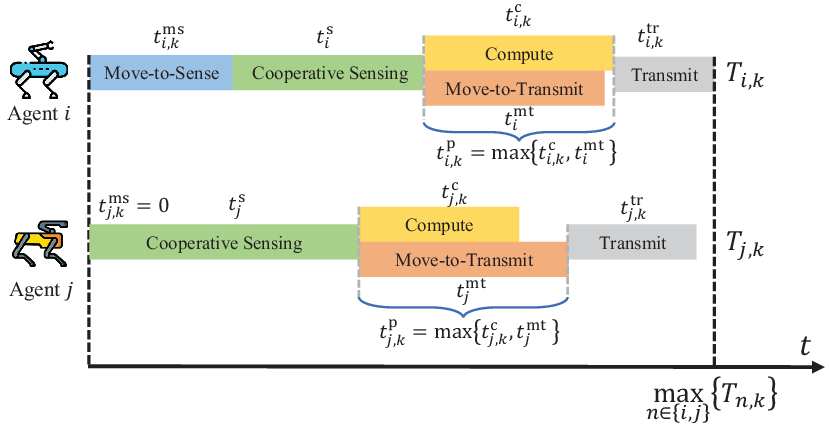}
    \caption{Closed-loop workflow of two agents assigned to the same region.}
    \label{fig.workflow}
\end{figure}

To mitigate co-channel interference among the spatially distributed agents, we assume that the system adopts orthogonal frequency-division multiple access (OFDMA). Let $b_n$ and $p_n$ denote the allocated bandwidth and the transmit power of agent $n$, respectively. Then, the achievable uplink transmission rate $R_n$ can be given as:
\begin{equation}
    R_{n,k} = b_n \log_2 \left( 1 + \frac{p_n h_{n,k}}{N_0 b_n} \right),
\end{equation}
where $N_0$ denotes the power spectral density of the additive white Gaussian noise (AWGN) at the BS receiver.

Consequently, the transmission delay $t_{n,k}^{\text{tr}}$ can be given by:
\begin{equation}
    t_{n,k}^{\text{tr}} = \frac{\rho_{n,k} D_n}{R_{n,k}} = \frac{\rho_{n,k} \gamma_n t_n^{\text{s}}}{b_n \log_2 \left( 1 + \frac{p_n h_{n,k}}{N_0 b_n} \right)}.
\end{equation}
The corresponding energy consumed for this uplink transmission is $E_{n,k}^{\text{tr}} = p_n t_{n,k}^{\text{tr}}$.

Fig.~\ref{fig.workflow} depicts the complete update workflow of the agent to synchronize the DT.

\subsection{Digital Twin Update Quality}
In a multi-agent cooperative DT network, the system must simultaneously guarantee the spatial fidelity of the sensed data and minimize the temporal staleness of the synchronization. To holistically evaluate this process, we quantify the DT update quality from both the accuracy and deviation perspectives.

\subsubsection{Cooperative Sensing Accuracy}
To ensure the reliability of the reconstructed digital replicas, the system imposes a fidelity requirement on the data acquisition. The cooperative sensing accuracy $\Theta_k$ for region $k$ is modeled as the average aggregated accuracy contributed by the dispatched agents:
\begin{equation}
    \Theta_k = \frac{1}{N_k} \sum_{n=1}^N \alpha_{n,k} q_{n,k}.
\end{equation}
Depending on the specific downstream applications hosted by the DT, different regions possess heterogeneous tolerance for semantic ambiguity. Therefore, the cooperative accuracy must satisfy the region-specific threshold constraint, i.e., $\Theta_k \ge \Theta_k^{\text{th}}, \forall k \in \mathcal{K}$.

\subsubsection{Twin Deviation}
By integrating the physical phases elaborated in the preceding subsections, the end-to-end closed-loop latency for a dispatched agent $n$ to accomplish the entire update workflow is summarized as:
\begin{equation}
    T_{n,k} = t_{n,k}^\text{ms} + t_n^{\text{s}} + t_{n,k}^{\text{p}} + t_{n,k}^{\text{tr}}.
\end{equation}
Since the DT orchestrator relies on the holistic fusion of semantic features extracted by all agents assigned to region~$k$, denoted by the set $\mathcal{S}_k = \{n \mid \alpha_{n,k} = 1\}$, the regional synchronization is fundamentally constrained by a straggler effect. Consequently, the effective cooperative update delay for region $k$ is determined by the maximum latency among its collaborative agents, given by $\max_{n \in \mathcal{S}_k} \{ T_{n,k} \}$.

\begin{figure}[t]
    \centering
    \includegraphics[width=\linewidth]{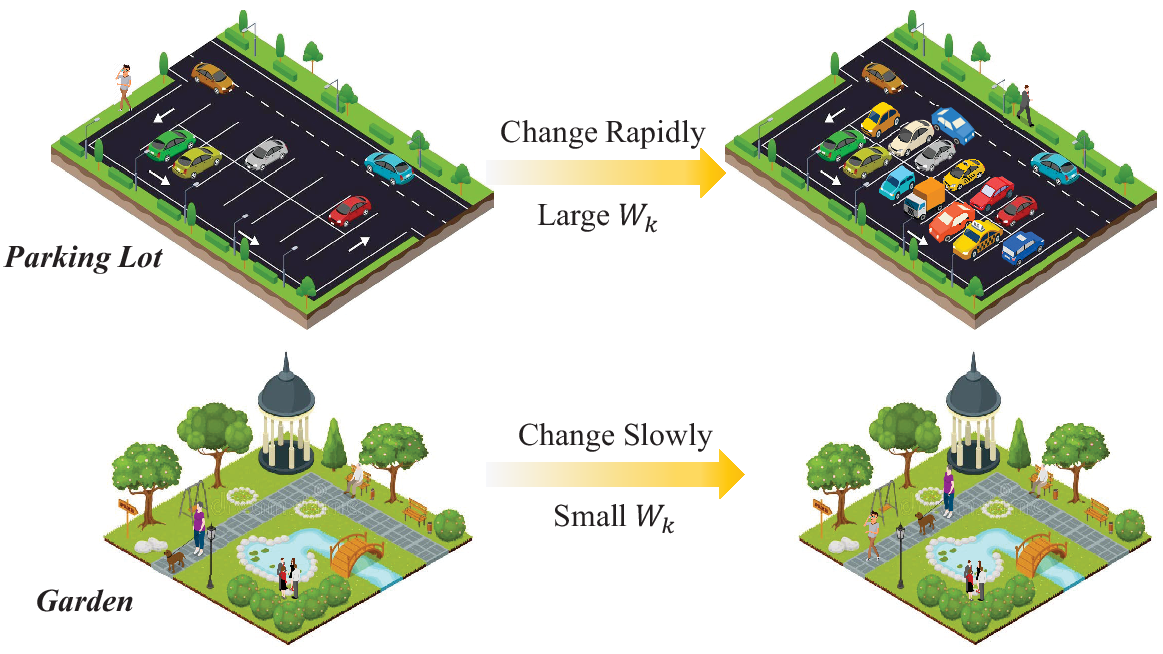}
    \caption{An example illustrating the concept of physical volatility rate $W_k$, which can be estimated from historical data by the global DT.}
    \label{fig.Wk}
\end{figure}

During the data acquisition and uploading period, the physical environment continuously evolves, causing the virtual state maintained at the BS to drift from the physical reality. Let $W_k$ denote the physical volatility rate of region $k$, which characterizes how drastically the local environment changes over time, as illustrated in Fig.~\ref{fig.Wk}. Then, we define the twin deviation $\Delta_k$ at the completion moment of the regional update as:
\begin{equation}
    \Delta_k = W_k \max_{n \in \mathcal{S}_k} \{ T_{n,k} \},
\end{equation}
which intuitively quantifies the information freshness.

\subsection{Problem Formulation}
The goal of the DT orchestrator at the BS is to intelligently dispatch the mobile agents and allocate the multidimensional communication, sensing, computation, and kinematic resources to keep the virtual replicas synchronized with the physical reality. Specifically, the system aims to minimize the maximum twin deviation across all targeted regions.

Let $\mathcal{X} \triangleq \{ \boldsymbol{\alpha}, \mathbf{v}^{\text{ms}}, \mathbf{t}^\text{s}, \boldsymbol{\rho}, \mathbf{f}, \mathbf{v}^{\text{mt}}, \mathbf{t}^\text{mt}, \mathbf{b}, \mathbf{p} \}$ denote the set of joint optimization variables. To maintain the compactness of the formulation, we define $E_{n,k}^{\text{tot}} = E_{n,k}^{\text{ms}} + E_n^{\text{s}} + E_{n,k}^{\text{c}} + E_n^{\text{mt}} + E_{n,k}^{\text{tr}}$ as the aggregate energy consumption of agent $n$ throughout the closed-loop workflow. The optimization problem is mathematically formulated as follows:
\begin{subequations}\label{eq.pf}
\begin{align}
\min_{\mathcal{X}} \quad & \max_{k \in \mathcal{K}} \ \Delta_k \tag{\theequation} \\
\text{s.t.} \hspace{1.15em}
& \frac{1}{N_k} \sum_{n=1}^N \alpha_{n,k} q_{n,k} \ge \Theta_k^{\text{th}}, \quad \forall k \in \mathcal{K}, \label{eq:c1_acc} \\
& \alpha_{n,k} \rho_n^{\min} (1 - Q_k) \le \rho_{n,k} \le 1, \quad \forall n \in \mathcal{N}, \forall k \in \mathcal{K}, \label{eq:c2_comp} \\
& \sum_{k=1}^K \alpha_{n,k} \le 1, \quad \forall n \in \mathcal{N}, \label{eq:c3_assign} \\
& \sum_{n=1}^N \alpha_{n,k} \ge 1, \quad \forall k \in \mathcal{K}, \label{eq:c4_cover} \\
& \sum_{k=1}^K \alpha_{n,k} E_{n,k}^{\text{tot}} \le E_n^{\max}, \quad \forall n \in \mathcal{N}, \label{eq:c5_energy} \\
& \sum_{n=1}^N b_n \le B_{\text{tot}}, \label{eq:c6_band} \\
& \alpha_{n,k} \in \{0, 1\}, \quad \forall n \in \mathcal{N}, \forall k \in \mathcal{K}, \label{eq:c7_binary} \\
& f_n \in (0, f_n^{\max}], \ p_n \in [0, P_n^{\max}], \ v_n^{\text{ms}}, v_n^{\text{mt}} \in (0, v_n^{\max}], \notag \\
& \ b_n \ge 0, \ t_n^{\text{s}} \ge 0, \ t_n^{\text{mt}} \ge 0, \quad \forall n \in \mathcal{N}, \label{eq:c8_bound}
\end{align}
\end{subequations}
where $E_n^{\max}$ represents the maximum available energy budget for agent $n$, $B_{\text{tot}}$ denotes the total available bandwidth, $f_n^{\max}$ and $P_n^{\max}$ designate the hardware upper bounds for the local computation capacity and the transmit power of each agent, respectively.

In problem \eqref{eq.pf}, constraint \eqref{eq:c1_acc} enforces the quality-of-service (QoS) requirement for the application-aware DT. Constraint \eqref{eq:c2_comp} bounds the DT prior-driven semantic compression ratio. Furthermore, constraints \eqref{eq:c3_assign} and \eqref{eq:c4_cover} jointly govern the topological dispatching matrix; specifically, \eqref{eq:c3_assign} prevents assignment conflicts for individual agents, while \eqref{eq:c4_cover} acts as a mandatory coverage guarantee to prevent any critical physical region from experiencing an update blackout. In terms of physical limitations, constraint \eqref{eq:c5_energy} ensures that the aggregate energy consumption does not exceed the onboard battery capacity. Finally, constraint \eqref{eq:c6_band} encapsulates the conservation of the overall spectrum resources, and \eqref{eq:c7_binary}--\eqref{eq:c8_bound} define the legitimate operational domains for the binary assignment, computation, radio frequency, kinematic, and temporal variables.

Problem \eqref{eq.pf} is a highly intractable MINLP problem. Its complexity stems from the non-smooth min-max objective, the deep coupling of binary dispatching indicators with continuous resources, and the severe non-convexity caused by fractional and logarithmic terms. This computational prohibitiveness necessitates a low-complexity joint optimization algorithm.

\section{Algorithm Design}\label{Sec:ad}
To tackle problem \eqref{eq.pf}, we propose a hierarchical two-layer optimization framework. In the outer-layer, the combinatorial multi-agent assignment is modeled as a dynamic matching game, whereas in the inner-layer, the coupled continuous resources are iteratively optimized utilizing a block coordinate descent (BCD) approach.

\subsection{Two-Layer Problem Decomposition}
To address the computational intractability of problem \eqref{eq.pf}, we decompose it into a hierarchical two-layer optimization framework. We partition the variables into the discrete assignment matrix $\boldsymbol{\alpha} = \{\alpha_{n,k}\}$ and the continuous resource set $\mathcal{X}_{\text{con}} = \{ \mathbf{v}^{\text{ms}}, \mathbf{t}^\text{s}, \boldsymbol{\rho}, \mathbf{f}, \mathbf{v}^{\text{mt}}, \mathbf{t}^\text{mt}, \mathbf{b}, \mathbf{p} \}$. Unlike standard alternating optimization, we adopt a hierarchical perspective where the outer-layer manages the combinatorial topology while the inner-layer evaluates the system performance for a given assignment.

\subsubsection{Outer-Layer Topology Matching}
The outer-layer seeks the optimal assignment matrix $\boldsymbol{\alpha}$ to minimize the system-wide synchronization bottleneck. For a specific assignment $\boldsymbol{\alpha}$, the system cost is defined as the minimum maximum twin deviation achievable under the optimized resource allocation. The outer-layer problem is formulated as:
\begin{subequations}\label{eq.outer}
\begin{align}
\min_{\boldsymbol{\alpha}} \quad & J(\boldsymbol{\alpha}) \tag{\theequation} \\
\text{s.t.} \hspace{1.15em} & \eqref{eq:c3_assign}, \eqref{eq:c4_cover}, \eqref{eq:c7_binary},
\end{align}
\end{subequations}
where $J(\boldsymbol{\alpha})$ is the value function representing the optimized objective value of the inner-layer problem for a fixed $\boldsymbol{\alpha}$.

\subsubsection{Inner-Layer Resource Allocation}
For any candidate assignment $\boldsymbol{\alpha}$ proposed by the outer-layer, the inner-layer jointly optimizes the sensing, communication, computation, and control resources to minimize the synchronization latency. This subproblem is defined as:
\begin{subequations}\label{eq.inner}
\begin{align}
\min_{\mathcal{X}_{\text{con}}} \quad & \max_{k \in \mathcal{K}} \ \Delta_k \tag{\theequation} \\
\text{s.t.} \hspace{1.15em} & \eqref{eq:c1_acc}, \eqref{eq:c2_comp}, \eqref{eq:c5_energy}, \eqref{eq:c6_band}, \eqref{eq:c8_bound}.
\end{align}
\end{subequations}

\subsection{Dynamic Matching Game with Externalities}
Based on the hierarchical decomposition, the outer-layer aims to find the optimal assignment strategy over a vast combinatorial space. We formulate this topological dispatching process as a dynamic matching game. To capture the potential idle state of agents, we introduce a virtual void region denoted by $\emptyset$, expanding the target set to $\mathcal{K}' = \mathcal{K} \cup \{\emptyset\}$. A matching state is defined as a mapping $\mu: \mathcal{N} \rightarrow \mathcal{K}'$, where $\mu(n) = k \in \mathcal{K}$ implies $\alpha_{n,k} = 1$, and $\mu(n) = \varnothing$ implies the agent is temporarily idle.

Unlike traditional bipartite matching, our framework inherently exhibits strong peer effects. According to \eqref{eq.cs}, the required sensing time $t_n^{\text{s}}$ is coupled with the total number of collaborative agents $N_k(\mu)$ in the same region. Consequently, the minimum closed-loop delay $T_{n,k}^*(\mu)$ evaluated by the inner-layer resource allocation is dependent on the strategies of other agents, defining this as a matching game with externalities. 

For notational consistency, let $\boldsymbol{\alpha}(\mu)$ denote the equivalent binary assignment matrix constructed from a specific matching state $\mu$. The system cost under topology $\mu$, denoted as $J(\mu)$, is mathematically equivalent to the outer-layer objective $J(\boldsymbol{\alpha})$ defined in \eqref{eq.outer}. Based on the inner-layer evaluation, $J(\mu)$ can be explicitly expressed as:
\begin{equation}\label{eq:cost_func}
    J(\mu) \triangleq J\left(\boldsymbol{\alpha}(\mu)\right) = \max_{k \in \mathcal{K}} \left\{ W_k \max_{n \in \mathcal{S}_k(\mu)} T_{n,k}^*(\mu) \right\},
\end{equation}
where $\mathcal{S}_k(\mu) = \{n \mid \mu(n) = k\}$ denotes the active collaborative subset for region $k$. Note that the void region $\emptyset$ does not contribute to the system cost.

To address this matching problem, the execution of the proposed algorithm proceeds in two main phases. First, we establish a fully deployed initial state to maximize the initial cooperative gain and define the operational rules bounded by a dual-stage feasibility pruning mechanism. Subsequently, we introduce the core algorithm driven by a heuristic-guided state exploration policy to efficiently refine the matching topology.

\subsubsection{Two-Phase Initialization}
To establish a high-quality initial matching state $\mu^{(0)}$ that fully exploits the available multi-agent resources, we adopt a two-phase graph-theoretic approach.

\textit{Phase I:} To strictly satisfy the fundamental coverage constraint \eqref{eq:c4_cover}, we construct a weighted bipartite graph $\mathcal{G} = (\mathcal{N}, \mathcal{K}, \mathcal{E})$. An edge $e_{n,k} \in \mathcal{E}$ is associated with a cost weight $w_{n,k} = d_{n,k}^{\text{ms}}/W_k$. We find a matching subset $\mathcal{M}_{\text{base}}^{(0)} \subset \mathcal{E}$ covering all $K$ regions while minimizing the aggregate initial cost $\sum_{(n,k) \in \mathcal{M}_{\text{base}}^{(0)}} w_{n,k}$. This is optimally solved in $\mathcal{O}(N^3)$ time utilizing the classic Hungarian algorithm.

\textit{Phase II:} Instead of leaving the remaining $N-K$ agents idle, we actively deploy them to maximize the initial cooperative sensing gain. Let $\mathcal{N}_{\text{base}}$ denote the agents assigned in phase I. For each remaining agent $n \in \mathcal{N} \setminus \mathcal{N}_{\text{base}}$, we assign it to $\mu^{(0)}(n) = \arg\min_{k \in \mathcal{K}} d_{n,k}^{\text{ms}}/W_k$. This full-deployment strategy provides a robust and highly cooperative starting point $\mu^{(0)}$ for the subsequent game.

\subsubsection{Operations and Feasibility Pruning}
To enable autonomous strategy exploration, we define two fundamental operations:
\begin{itemize}
    \item \textit{Unilateral Transfer:} An agent $n$ unilaterally changes its assignment from $\mu(n)$ to $k'$. The resulting state is denoted by $\mu_{n \to k'}$.
    \item \textit{Bilateral Swap:} Agent $n$ assigned to $\mu(n)$ and agent $n'$ assigned to $\mu(n')$ mutually exchange their tasks, yielding state $\mu^{n\leftrightarrow n'}$.
\end{itemize}

Evaluating the exact cost $J(\mu')$ for every tentative state requires triggering the inner-layer BCD optimization, which is computationally prohibitive. Therefore, we introduce a dual-stage feasibility check to reduce the invalid search space:
\begin{itemize}
    \item \textit{Pruning (Outer-Layer):} Before executing the BCD algorithm, the orchestrator evaluates the theoretical energy lower bound $E_{\text{req}}^{\min}$, which is defined as the minimum kinematic energy $E_{n,k}^{\text{ms}}$ to reach the target region. If $E_{\text{req}}^{\min} > a E_n^{\max}$ where $a\in (0,1)$ is an energy safety factor, the tentative operation is immediately rejected without invoking the inner loop.
    \item \textit{Penalty (Inner-Layer):} If the pruning is passed, the inner-layer BCD is invoked utilizing a warm-start strategy. If the BCD solver finds the continuous constraints strictly infeasible for the given topology, it returns a substantial penalty $J(\mu') = +\infty$, ensuring the operation is discarded.
\end{itemize}
An operation is approved if and only if it passes the dual-stage check, maintains basic coverage, and strictly decreases the system bottleneck, i.e., $J(\mu') < J(\mu)$.

\subsubsection{Heuristic-Guided State Exploration Policy}
To prevent aimless random walks in the vast combinatorial state space, we design a deterministic heuristic policy. At each iteration, the orchestrator generates the optimal candidate state $\mu'$ from the current state $\mu$ following a structured three-step procedure.

\textit{Step 1: Bottleneck and Donor Identification.}
The orchestrator first isolates the system bottleneck region $k^*$ and the straggler agent $n^*$ causing the maximum delay. Simultaneously, to exploit the slack resources, it identifies the safest donor region $k_{\text{donor}}$ exhibiting the minimum twin deviation:
\begin{align}
    k^* = \arg\max_{k \in \mathcal{K}} \Delta_k(\mu), \quad & n^* = \arg\max_{n \in \mathcal{S}_{k^*}(\mu)} T_{n,k^*}^*(\mu), \\
    k_{\text{donor}} =& \arg\min_{k \in \mathcal{K}} \Delta_k(\mu).
\end{align}

\textit{Step 2: Proximity-Aware Candidate Selection.}
To resolve the bottleneck efficiently while maximizing the probability of passing the energy pruning \eqref{eq:c5_energy}, we dynamically determine the optimal candidate agents outside the bottleneck region. We define the best active swap candidate $n_{\text{swap}}$ and the best donor candidate $n_{\text{donor}}$ based on their spatial proximity to $k^*$:
\begin{align}
    n_{\text{swap}} &= \arg\min_{n \in \mathcal{N} \setminus \left(\mathcal{S}_{k^*}(\mu) \cup \mathcal{S}_{\emptyset}(\mu)\right)} d_{n,k^*}^{\text{ms}}, \label{eq:cand_swap} \\
    n_{\text{donor}} &= \arg\min_{n \in \mathcal{S}_{k_{\text{donor}}}(\mu)} d_{n,k^*}^{\text{ms}}. \label{eq:cand_donor}
\end{align}
Additionally, if the void region is not empty, the best idle candidate is defined as $n_{\text{add}} = \arg\min_{n \in \mathcal{S}_{\emptyset}(\mu)} d_{n,k^*}^{\text{ms}}$.

\textit{Step 3: Best-Response Proposal Generation.}
The orchestrator constructs a restricted, high-potential candidate subset $\Omega(\mu)$ incorporating targeted operations specifically designed to dismantle the bottleneck. These operations include dropping the straggler, swapping the straggler, and executing unilateral transfers from the void or donor regions to the bottleneck:
\begin{align}\label{eq:omega_set}
    \Omega(\mu) = & \left\{ \mu_{n^* \to \emptyset} \mid |\mathcal{S}_{k^*}(\mu)| > 1 \right\} \cup \left\{ \mu^{n^*\leftrightarrow n_{\text{swap}}} \right\} \notag \\
    &\cup \left\{ \mu_{n_{\text{add}} \to k^*} \mid \mathcal{S}_{\emptyset}(\mu) \neq \emptyset \right\} \notag \\
    &\cup \left\{ \mu_{n_{\text{donor}} \to k^*} \mid |\mathcal{S}_{k_{\text{donor}}}(\mu)| > 1 \right\}.
\end{align}

To guarantee rapid and strictly monotonic convergence, the orchestrator executes a best-response exploitation over $\Omega(\mu)$. Specifically, it evaluates all feasible candidate states in $\Omega(\mu)$ that pass the outer-layer heuristic energy pruning. It then invokes the inner-layer BCD optimization for these surviving candidates and deterministically proposes the state $\mu'$ that yields the minimum system cost:
\begin{equation}\label{eq:steepest_descent}
    \mu' = \arg\min_{\hat{\mu} \in \Omega(\mu)} J(\hat{\mu}).
\end{equation}
By exhaustively evaluating the explicitly constructed high-potential subset, this best-response mechanism maximizes the marginal performance gain at each iteration.

\subsection{BCD for Resource Allocation}
For a given matching topology $\mu$, to eliminate the non-smoothness of the objective function in \eqref{eq.inner}, we introduce an auxiliary variable $\tau$ representing the maximum twin deviation among all regions. Similarly, to handle the parallel execution phase, we utilize continuous variables $t_{n,k}^{\text{p}}$ to equivalently replace the $\max$ operator in \eqref{eq.tp}. Thus, problem \eqref{eq.inner} can be equivalently reformulated as:
\begin{subequations}\label{eq.p2}
\begin{align}
\min_{\mathcal{X}_{\text{con}}, \tau, \mathbf{t}^{\text{p}}} \quad & \tau \tag{\theequation} \\
\text{s.t.} \hspace{1.6em}
& W_k \left( t_{n,k}^\text{ms} + t_n^{\text{s}} + t_{n,k}^{\text{p}} + t_{n,k}^{\text{tr}} \right) \le \tau, \notag \\
& \hspace{1.25in}  \forall k \in \mathcal{K}, \forall n \in \mathcal{S}_k(\mu), \label{eq.p2_tau} \\
& t_{n,k}^{\text{p}} \ge t_{n,k}^\text{c}, \ t_{n,k}^{\text{p}} \ge t_n^{\text{mt}}, \quad \forall n \in \mathcal{N}, \label{eq.p2_tp} \\
& \eqref{eq:c1_acc}, \eqref{eq:c2_comp}, \eqref{eq:c5_energy}, \eqref{eq:c6_band}, \eqref{eq:c8_bound}. \label{eq.p2_orig}
\end{align}
\end{subequations}
Although reformulated, problem \eqref{eq.p2} remains highly non-convex due to the deep coupling of fractional, logarithmic, and exponential terms. To tackle this intractability, we propose a low-complexity BCD algorithm that systematically decouples the continuous variables into sensing and computation, communication, and control blocks.

\subsubsection{Closed-Form Solution for $\mathbf{v}^{\text{ms}}$}
Before diving into the multi-variable block optimization, we first observe that the move-to-sense speed $\mathbf{v}^{\text{ms}}$ exhibits a monotonic decoupling property, allowing us to derive an exact analytical solution prior to the iterative numerical process.

\begin{theorem}\label{theo:vms}
Given the topological assignment $\mu(n) = k$ and the temporary resources allocated for the subsequent phases, the optimal cruising speed $(v_n^{\text{ms}})^*$ for agent $n$ during the move-to-sense phase is given by:
\begin{equation}\label{eq:vms_closed_form}
    (v_n^{\text{ms}})^* = 
    \begin{cases}
        \min \left\{ v_n^{\max}, \frac{E_n^{\text{rms}}}{d_{n,k}^{\text{ms}} \lambda_2} \right\}, & \text{if } d_{n,k}^{\text{ms}} > 0, \\
        0, & \text{if } d_{n,k}^{\text{ms}} = 0,
    \end{cases}
\end{equation}
where $E_n^{\text{rms}} = E_n^{\max} - E_n^{\text{other}} - d_{n,k}^{\text{ms}}\lambda_1$ represents the residual energy available for scaling the kinematic speed, with $E_n^{\text{other}} = E_n^{\text{s}} + E_{n,k}^{\text{c}} + E_n^{\text{mt}} + E_{n,k}^{\text{tr}}$ denoting the energy consumed in all other workflow phases.
\end{theorem}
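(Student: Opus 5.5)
With all other continuous variables held fixed, I would isolate the subproblem in $v_n^{\text{ms}}$ alone. In the epigraph reformulation \eqref{eq.p2}, the variable $v_n^{\text{ms}}$ enters only two constraints. The first is the latency constraint \eqref{eq.p2_tau} for agent $n$, through $t_{n,k}^{\text{ms}} = d_{n,k}^{\text{ms}}/v_n^{\text{ms}}$. The second is the energy budget \eqref{eq:c5_energy}, through $E_{n,k}^{\text{ms}} = d_{n,k}^{\text{ms}}(\lambda_1+\lambda_2 v_n^{\text{ms}})$. It appears in no other agent's constraints and in none of the accuracy, compression or bandwidth constraints.

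I would first handle the degenerate case $d_{n,k}^{\text{ms}}=0$. There both $t_{n,k}^{\text{ms}}$ and $E_{n,k}^{\text{ms}}$ vanish identically, so the speed is irrelevant. Any admissible value is optimal, and setting it to $0$ is the convention that the agent does not move. I would remark that this sits at the boundary of the open interval in \eqref{eq:c8_bound} and is harmless because the speed does not affect the objective.

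For $d_{n,k}^{\text{ms}}>0$ I would argue by monotonicity.
\begin{itemize}
\item The term $t_{n,k}^{\text{ms}}$ is strictly decreasing in $v_n^{\text{ms}}$. Hence $T_{n,k}$ is strictly decreasing, and the left-hand side of \eqref{eq.p2_tau} for agent $n$ is nonincreasing in the speed. A larger speed therefore never worsens $\tau$; it strictly loosens agent $n$'s own constraint and weakly improves $\max_k\Delta_k$.
\item The energy $E_{n,k}^{\text{ms}}$ is affine and strictly increasing in $v_n^{\text{ms}}$, since $\lambda_2>0$.
\item Writing \eqref{eq:c5_energy} with $\alpha_{n,k}=1$ gives $d_{n,k}^{\text{ms}}\lambda_1 + d_{n,k}^{\text{ms}}\lambda_2 v_n^{\text{ms}} + E_n^{\text{other}} \le E_n^{\max}$. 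This is equivalent to $v_n^{\text{ms}} \le E_n^{\text{rms}}/(d_{n,k}^{\text{ms}}\lambda_2)$.
\end{itemize}
Together with $v_n^{\text{ms}}\le v_n^{\max}$, the feasible set is the interval $(0,\min\{v_n^{\max}, E_n^{\text{rms}}/(d_{n,k}^{\text{ms}}\lambda_2)\}]$. Because the objective is nonincreasing in the speed, the largest feasible speed is optimal, which is \eqref{eq:vms_closed_form}.

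The main subtlety is the optimality claim. Since the objective is a max over regions, increasing $v_n^{\text{ms}}$ may not strictly reduce $\tau$ when agent $n$ is not the bottleneck, so the maximizer is optimal but not necessarily unique. I would state that the upper endpoint is an optimal choice, and the strictly dominant choice whenever agent $n$ is the straggler of the bottleneck region.

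I also need feasibility, namely $E_n^{\text{rms}}>0$. If $E_n^{\text{rms}}\le 0$, no positive speed satisfies the energy constraint, so the given continuous allocation is infeasible. That case is the one caught by the inner-layer penalty $J(\mu')=+\infty$, and the theorem should be read under the standing assumption that the fixed resources leave positive residual energy.
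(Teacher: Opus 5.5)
Your proof is correct and uses the same mechanism as the paper. With the other blocks fixed, the delay is strictly decreasing in $v_n^{\text{ms}}$, and the feasible set is an interval capped by $v_n^{\max}$ and $E_n^{\text{rms}}/(d_{n,k}^{\text{ms}}\lambda_2)$, so the upper endpoint is optimal. The paper frames this through KKT on the surrogate objective $t_{n,k}^{\text{ms}}$, where stationarity forces a positive multiplier and hence an active upper bound. You argue directly on the true max-type objective, which lets you note the possible non-uniqueness when agent $n$ is not the straggler and the requirement $E_n^{\text{rms}}>0$, neither of which the paper states.
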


\begin{proof}
\emph{
For any dispatched agent with a positive moving distance $d_{n,k}^{\text{ms}} > 0$, the optimization subproblem with respect to $v_n^{\text{ms}}$ aims to minimize the mobility delay $t_{n,k}^{\text{ms}} = d_{n,k}^{\text{ms}} / v_n^{\text{ms}}$. This minimization is subject to the mechanical speed limit $v_n^{\text{ms}} \le v_n^{\max}$ and the residual energy constraint $d_{n,k}^{\text{ms}}\lambda_2 v_n^{\text{ms}} \le E_n^{\text{rms}}$. Let $\nu_1 \ge 0$ and $\nu_2 \ge 0$ denote the dual variables associated with these two upper-bound constraints, respectively. The Lagrangian is formulated as:
\begin{equation}
    \mathcal{L} = \frac{d_{n,k}^{\text{ms}}}{v_n^{\text{ms}}} + \nu_1 \left( d_{n,k}^{\text{ms}}\lambda_2 v_n^{\text{ms}} - E_n^{\text{rms}} \right) + \nu_2 \left( v_n^{\text{ms}} - v_n^{\max} \right).
\end{equation}
According to the Karush-Kuhn-Tucker (KKT) stationarity condition, the optimal solution must satisfy:
\begin{equation}
    \frac{\partial \mathcal{L}}{\partial v_n^{\text{ms}}} = -\frac{d_{n,k}^{\text{ms}}}{(v_n^{\text{ms}})^2} + \nu_1 d_{n,k}^{\text{ms}} \lambda_2 + \nu_2 = 0.
\end{equation}
Since $d_{n,k}^{\text{ms}} / (v_n^{\text{ms}})^2 > 0$, the stationarity equality necessitates that $\nu_1 d_{n,k}^{\text{ms}} \lambda_2 + \nu_2 > 0$. This implies that at least one of the dual variables must be strictly positive. Based on the KKT complementary slackness condition, this guarantees that at least one of the corresponding upper-bound constraints is strictly active. Physically, since the objective function strictly monotonically decreases with $v_n^{\text{ms}}$, the optimal speed is achieved precisely at the intersection of the boundaries of the feasible domain, which directly yields \eqref{eq:vms_closed_form}.
}
\hfill $\blacksquare$
\end{proof}


\subsubsection{Optimal Sensing and Computation Block}
In the first block, we optimize $\{\mathbf{t}^\text{s}, \boldsymbol{\rho}, \mathbf{f}\}$ while fixing the communication and autonomous mobility variables. We introduce the computation workload $\mathbf{L} = \{L_n\}$ and an auxiliary non-negative variable vector $\mathbf{z} = \{z_n\}$ with $z_n = -\ln \rho_{n,k}$. By leveraging this variable substitution, we eliminate the semantic compression ratio $\boldsymbol{\rho}$ and the computation capacity $\mathbf{f}$ from the independent optimization space. Let $\mathcal{X}_{\text{sc}} \triangleq \{ \mathbf{t}^\text{s}, \mathbf{t}^{\text{c}}, \mathbf{L}, \mathbf{z} \}$ denote the continuous variable set of this block.

The primary intractability of this subproblem stems from the coupled average sensing accuracy, the highly non-convex computation energy, and the bilinear coupled delay terms. We systematically decouple these challenges as follows.

The QoS constraint \eqref{eq:c1_acc} requires the average accuracy of all collaborative agents in region $k$ to exceed $\Theta_k^{\text{th}}$. However, the sigmoid function $q_{n,k}(t_n^{\text{s}})$ in \eqref{eq.cs} is neither globally convex nor concave. To strictly address this coupled summation constraint, we employ the descent lemma to construct a guaranteed concave quadratic lower bound. At the $i$-th BCD iteration, around the local point $t_n^{\text{s},(i)}$, we bound $q_{n,k}(t_n^{\text{s}})$ as:
\begin{align}\label{eq:sca_acc}
    q_{n,k}(t_n^{\text{s}}) \ge q_{n,k}\left(t_n^{\text{s},(i)}\right) & + \nabla q_{n,k}^{(i)} \left( t_n^{\text{s}} - t_n^{\text{s},(i)} \right) \notag \\
    & - \frac{M_n}{2} \left( t_n^{\text{s}} - t_n^{\text{s},(i)} \right)^2,
\end{align}
where $\nabla q_{n,k}^{(i)}$ is the first-order derivative with respect to $t_n^{\text{s}}$, and the penalty parameter $M_n > 0$ is selected to be strictly greater than the Lipschitz constant of the gradient. By replacing the original accuracy with this surrogate function, the cooperative sensing constraint is rigorously convexified as:
\begin{equation}\label{eq:sub1_acc}
    \frac{1}{N_k} \sum_{n \in \mathcal{S}_k(\mu)} \tilde{q}_{n,k}^{(i)}(t_n^{\text{s}}) \ge \Theta_k^{\text{th}}, \quad \forall k \in \mathcal{K},
\end{equation}
where $\tilde{q}_{n,k}^{(i)}(t_n^{\text{s}})$ denotes the right-hand side of \eqref{eq:sca_acc}.

According to \eqref{eq.tc}, the required computation capacity is a dependent variable $f_n = L_n / t_{n,k}^{\text{c}}$. By substituting $f_n$, the computation energy originally formulated in \eqref{eq.ec} can be rewritten as a perspective function:
\begin{equation}
    E_{n,k}^{\text{c}} = \kappa \frac{L_n^3}{(t_{n,k}^{\text{c}})^2}.
\end{equation}
Remarkably, the fractional function $f(x,y) = x^3/y^2$ is strictly jointly convex for $x \ge 0, y > 0$. By optimizing $L_n$ and $t_{n,k}^{\text{c}}$ directly, we bypass the cubic non-convexity induced by the frequency term. Furthermore, the hardware bound $f_n \le f_n^{\max}$ is linearly recast as $L_n \le f_n^{\max} t_{n,k}^{\text{c}}$.

The required computation workload must satisfy $L_n \ge \gamma_n \eta_n t_n^\text{s} z_n$. The right-hand side is a bilinear term $t_n^\text{s} z_n$. Utilizing the arithmetic-geometric mean inequality, we construct a tight convex upper bound at the $i$-th iteration:
\begin{equation}\label{eq:sca_workload}
    t_n^\text{s} z_n \le \frac{1}{2} \left( \frac{(t_n^\text{s})^2}{\chi_n^{(i)}} + \chi_n^{(i)} z_n^2 \right) \triangleq \Upsilon_n^{(i)}(t_n^\text{s}, z_n),
\end{equation}
where the fixed coefficient is defined as $\chi_n^{(i)} = t_n^{\text{s},(i)} / z_n^{(i)}$.

Similarly, substituting $\rho_{n,k} = e^{-z_n}$, the transmission delay $t_{n,k}^{\text{tr}} = \varpi_{n,k} t_n^\text{s} e^{-z_n}$ contains a non-convex coupling, where the inverse rate $\varpi_{n,k} = \gamma_n / R_{n,k}$ acts as a predetermined constant in this block. We bound the transmission delay as:
\begin{equation}\label{eq:sca_trans}
    t_{n,k}^{\text{tr}} \le \frac{\varpi_{n,k}}{2} \left( \frac{(t_n^\text{s})^2}{\varphi_n^{(i)}} + \varphi_n^{(i)} e^{-2z_n} \right) \triangleq \tilde{t}_{n,k}^{\text{tr}},
\end{equation}
where $\varphi_n^{(i)} = t_n^{\text{s},(i)} / e^{-z_n^{(i)}}$. Since the exponential function $e^{-2z_n}$ is strictly convex, $\tilde{t}_{n,k}^{\text{tr}}$ constitutes a valid convex upper bound. Consequently, the associated transmission energy is convexified as $\tilde{E}_{n,k}^{\text{tr}} = p_n \tilde{t}_{n,k}^{\text{tr}}$.

By assembling the aforementioned convexified components, the optimization subproblem for the sensing and computation block at the $i$-th iteration is formulated as a standard convex optimization problem:
\begin{subequations}\label{eq.sub1}
\begin{align}
\min_{\mathcal{X}_{\text{sc}}, \tau, \mathbf{t}^{\text{p}}} \quad & \tau \tag{\theequation} \\
\text{s.t.} \hspace{1.5em}
& W_k \left( t_{n,k}^\text{ms} + t_n^{\text{s}} + t_{n,k}^{\text{p}} + \tilde{t}_{n,k}^{\text{tr}} \right) \le \tau, \notag \\
& \hspace{1.3in} \forall k \in \mathcal{K}, \forall n \in \mathcal{S}_k(\mu), \label{eq:sub1_tau} \\
& L_n \ge \gamma_n \eta_n \Upsilon_n^{(i)}(t_n^\text{s}, z_n), \quad \forall n \in \mathcal{S}_k(\mu), \label{eq:sub1_Ln} \\
& L_n \le f_n^{\max} t_{n,k}^{\text{c}}, \quad \forall n \in \mathcal{S}_k(\mu), \label{eq:sub1_fmax} \\
& E_{n,k}^{\text{ms}} + P_n^\text{s} t_n^\text{s} + \kappa \frac{L_n^3}{(t_{n,k}^{\text{c}})^2} + E_n^{\text{mt}} + \tilde{E}_{n,k}^{\text{tr}} \le E_n^{\max}, \notag \\
& \hspace{1.8in} \forall n \in \mathcal{S}_k(\mu), \label{eq:sub1_energy} \\
& 0 \le z_n \le -\ln \left( \rho_n^{\min} (1 - Q_k) \right), \quad \forall n \in \mathcal{S}_k(\mu), \label{eq:sub1_zn} \\
& \eqref{eq:sub1_acc}, \eqref{eq.p2_tp}, \eqref{eq:c8_bound}.
\end{align}
\end{subequations}
Problem \eqref{eq.sub1} can be efficiently solved using off-the-shelf solvers such as CVX. Once convergence is achieved, the physical variables are exactly recovered by $\rho_{n,k}^* = e^{-z_n^*}$ and $f_n^* = L_n^* / (t_{n,k}^{\text{c}})^*$. The optimal solutions obtained will update the reference points $\{ t_n^{\text{s},(i)}, z_n^{(i)} \}$ for the subsequent iterations.

\subsubsection{Optimal Communication Block}
In the second block, with the sensing, computation, and autonomous mobility variables fixed from the previous block, we jointly optimize $\{\mathbf{b},\mathbf{p}\}$. We introduce the auxiliary achievable rate $\mathbf{R} = \{R_{n,k}\}$ as an optimization variable, and let $\mathcal{X}_{\text{comm}} \triangleq \{ \mathbf{b}, \mathbf{p}, \mathbf{R} \}$ denote the continuous variable set of this block.

Mathematically, $R_{n,k}$ is the perspective function of the concave function $l(x) = \log_2\left(1 + \frac{h_{n,k}}{N_0} x\right)$, i.e., $R_{n,k}(b_n, p_n) = b_n l(p_n/b_n)$. According to the properties of perspective functions, $R_{n,k}$ is jointly concave with respect to $b_n$ and $p_n$ for $b_n > 0$ and $p_n > 0$. Therefore, the rate requirement can be equivalently represented as a convex constraint:
\begin{equation}\label{eq:sub2_rate}
    R_{n,k} - b_n \log_2 \left( 1 + \frac{p_n h_{n,k}}{N_0 b_n} \right) \le 0, \quad \forall n \in \mathcal{S}_k(\mu).
\end{equation}

Let $D_n^\text{eff} = \rho_{n,k} \gamma_n t_n^\text{s}$ denote the effective semantic payload and $T_{n}^{\text{non}} = t_{n,k}^\text{ms} + t_n^{\text{s}} + t_{n,k}^{\text{p}}$ denote the non-transmission delay. The delay constraint \eqref{eq.p2_tau} can be rewritten as:
\begin{equation}\label{eq:sub2_delay}
    W_k \left( T_{n}^{\text{non}} + \frac{D_n^\text{eff}}{R_{n,k}} \right) \le \tau, \quad \forall k \in \mathcal{K}, \forall n \in \mathcal{S}_k(\mu).
\end{equation}
Since the function $1/R_{n,k}$ is convex for $R_{n,k} > 0$, \eqref{eq:sub2_delay} defines a convex feasible region with respect to $R_{n,k}$ and $\tau$.

Furthermore, the transmission energy $E_{n,k}^{\text{tr}} = p_n t_{n,k}^{\text{tr}} = p_n (D_n^\text{eff} / R_{n,k})$ must satisfy the battery budget. By rearranging the terms, the energy constraint \eqref{eq:c5_energy} is transformed into a linear inequality:
\begin{equation}\label{eq:sub2_energy}
    p_n \le \frac{E_n^{\text{rt}}}{D_n^\text{eff}} R_{n,k}, \quad \forall n \in \mathcal{S}_k(\mu),
\end{equation}
where $E_n^{\text{rt}} = E_n^{\max} - (E_{n,k}^{\text{ms}} + E_n^{\text{s}} + E_{n,k}^{\text{c}} + E_n^{\text{mt}})$ is the residual energy.

By assembling these components, the communication block subproblem can be formulated as a standard convex program:
\begin{subequations}\label{eq.sub2}
\begin{align}
\min_{\mathcal{X}_{\text{comm}}, \tau} \quad & \tau \tag{\theequation} \\
\text{s.t.} \hspace{1.6em}
& 0 < p_n \le P_n^{\max}, \ b_n > 0, \ R_{n,k} \ge 0, \quad \forall n \in \mathcal{S}_k(\mu). \label{eq:sub2_c5} \\
& \eqref{eq:sub2_rate},\eqref{eq:sub2_delay},\eqref{eq:sub2_energy},\eqref{eq:c6_band}.
\end{align}
\end{subequations}
Problem \eqref{eq.sub2} is a convex problem that can be solved using standard optimization solvers.

\subsubsection{Optimal Control Block}
In the third block, with the sensing, computation, and communication resources determined in the preceding steps, we optimize the autonomous mobility strategy $\{\mathbf{v}^{\text{mt}}, \mathbf{t}^\text{mt}\}$. We perform a variable substitution by utilizing the exploration distance $\mathbf{d}^{\text{mt}} = \{d_n^{\text{mt}}\}$ as the direct optimization variable. Consequently, the cruising speed is expressed as $v_n^{\text{mt}} = d_n^{\text{mt}} / t_n^{\text{mt}}$. Let $\mathcal{X}_{\text{ctrl}} \triangleq \{ \mathbf{d}^{\text{mt}}, \mathbf{t}^\text{mt} \}$ denote the variable set of this block. 

The primary challenge lies in the intricate coupling between the spatial exploration distance and the location-dependent channel gain. We first establish the convexity of the resulting transmission delay.

\begin{lemma}\label{lemma:delay_conv}
Given fixed bandwidth $b_n$ and transmit power $p_n$, the transmission delay $t_{n,k}^{\text{tr}}$ is a strictly convex function with respect to the exploration distance $d_n^{\text{mt}}$.
\end{lemma}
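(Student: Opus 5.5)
The plan is to write $t_{n,k}^{\text{tr}}$ as a composition $g\circ h$ in the single variable $x=d_n^{\text{mt}}\ge 0$, with $b_n$, $p_n$ and the payload $D_n^{\text{eff}}=\rho_{n,k}\gamma_n t_n^{\text{s}}$ held fixed. The inner map is $h(x)=h_k^{\text{ref}}+\omega_k(1-e^{-x/\zeta_k})$. The outer map is
\begin{equation*}
g(h)=\frac{A}{\ln(1+c h)},\qquad A=\frac{D_n^{\text{eff}}\ln 2}{b_n},\quad c=\frac{p_n}{N_0 b_n},
\end{equation*}
so that $t_{n,k}^{\text{tr}}=g(h(x))$. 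I will then invoke the standard composition rule: a convex nonincreasing outer function composed with a concave inner function is convex. For strictness I will not rely on that rule alone but will compute the second derivative explicitly:
\begin{equation*}
\frac{d^2}{dx^2}\,g(h(x)) = g''(h)\,(h')^2 + g'(h)\,h'' .
\end{equation*}
The goal is to show that each of the two terms is nonnegative and that their sum is strictly positive.

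First, the inner map. Direct differentiation gives
\begin{equation*}
h'(x)=\frac{\omega_k}{\zeta_k}e^{-x/\zeta_k},\qquad h''(x)=-\frac{\omega_k}{\zeta_k^2}e^{-x/\zeta_k}.
\end{equation*}
Hence $h$ is increasing and concave. Moreover $h(x)\ge h_k^{\text{ref}}>0$, so $L:=\ln(1+ch)>0$ along the whole domain.

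Second, the outer map. A short computation yields
\begin{equation*}
g'(h)=-\frac{A c}{(1+ch)L^2}<0,\qquad g''(h)=\frac{A c^2 (L+2)}{(1+ch)^2 L^3}>0.
\end{equation*}
Both signs rely on $A>0$ and $c>0$, which hold because $p_n,b_n>0$ and $D_n^{\text{eff}}>0$ for a dispatched agent. Substituting into the chain-rule expression, the first term $g''(h)(h')^2$ is positive. The second term is also positive, since it is the product of two negative factors $g'<0$ and $h''<0$. Therefore $\frac{d^2 t_{n,k}^{\text{tr}}}{dx^2}>0$ on $x\ge0$, which gives strict convexity.

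The main obstacle is not the algebra but the degenerate parameter cases that the lemma implicitly excludes. If $\omega_k=0$, then $h$ is constant and $t_{n,k}^{\text{tr}}$ is constant in $d_n^{\text{mt}}$; in that case it is convex but not strictly convex. Likewise, $p_n=0$ or $D_n^{\text{eff}}=0$ makes the delay infinite or zero, respectively. I will therefore state explicitly that strictness holds under $\omega_k>0$, $p_n>0$, $b_n>0$ and $t_n^{\text{s}}>0$, which are the operating conditions of any dispatched agent, and note that mere convexity survives the boundary case $\omega_k=0$. The only genuinely careful computation is $g''$. There I will differentiate $-Ac(1+ch)^{-1}L^{-2}$ by the product rule, using $dL/dh=c/(1+ch)$, to obtain the factor $(L+2)$ and thereby confirm the sign.
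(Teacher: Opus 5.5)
Your proof is correct and follows the paper's route. Both treat the delay as a convex, decreasing function of the concave, increasing gain $h_{n,k}(d_n^{\text{mt}})$; the paper simply splits your outer map into $\log_2(1+A_n h)$ followed by $F_n/R$ and cites composition rules twice. Your explicit second-derivative computation is the more careful version: it actually certifies strictness and shows that strictness needs $\omega_k>0$. The paper leaves this implicit even though its model allows $\omega_k\ge 0$, and at $\omega_k=0$ the delay is constant in $d_n^{\text{mt}}$.
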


\begin{proof}
\emph{
Let $F_n = \rho_{n,k} \gamma_n t_n^{\text{s}} / b_n$ and $A_n = p_n / (N_0 b_n)$ be constants in this block. Then, the transmission delay can be expressed as:
\begin{equation}
    t_{n,k}^{\text{tr}}(d_n^{\text{mt}}) = \frac{F_n}{\log_2\left(1 + A_n h_{n,k}(d_n^{\text{mt}})\right)},
\end{equation}
where $h_{n,k}(d_n^{\text{mt}}) = h_k^{\text{ref}} + \omega_k \left(1 - e^{-d_n^{\text{mt}}/\zeta_k}\right)$. 
First, consider the term $u(d_n^{\text{mt}}) = 1 - e^{-d_n^{\text{mt}}/\zeta_k}$. Since its second-order derivative $u''(d_n^{\text{mt}}) = -e^{-d_n^{\text{mt}}/\zeta_k}/\zeta_k^2 < 0$, $u(d_n^{\text{mt}})$ is strictly concave. Since $h_{n,k}$ is a linear mapping of $u(d_n^{\text{mt}})$, the channel gain $h_{n,k}(d_n^{\text{mt}})$ is also concave. 
Next, let $y(h) = \log_2(1 + A_n h)$. Since $y(h)$ is concave and monotonically increasing for $h > 0$, the composite function $R(d_n^{\text{mt}}) = y\left(h_{n,k}(d_n^{\text{mt}})\right)$ is strictly concave according to the composition rules. 
Finally, let $\psi(R) = F_n / R$. Since $\psi(R)$ is strictly convex and monotonically decreasing for $R > 0$, and $R(d_n^{\text{mt}})$ is strictly concave, the final composite function $t_{n,k}^{\text{tr}}(d_n^{\text{mt}}) = \psi(R(d_n^{\text{mt}}))$ is strictly convex with respect to $d_n^{\text{mt}}$.
}
\hfill $\blacksquare$
\end{proof}

By substituting $v_n^{\text{mt}} = d_n^{\text{mt}} / t_n^{\text{mt}}$ into the kinematic model, the exploration energy $E_n^{\text{mt}}$ is reformulated as $E_n^{\text{mt}} = \lambda_1 d_n^{\text{mt}} + \lambda_2 (d_n^{\text{mt}})^2 / t_n^{\text{mt}}$. Note that the term $(d_n^{\text{mt}})^2 / t_n^{\text{mt}}$ is a quadratic-over-linear perspective function, which is jointly convex for $d_n^{\text{mt}} \ge 0$ and $t_n^{\text{mt}} > 0$. Leveraging these properties, the subproblem for the control block can be formulated as a standard convex program:
\begin{subequations}\label{eq.sub3}
\begin{align}
\min_{\mathcal{X}_{\text{ctrl}}, \tau, \mathbf{t}^{\text{p}}} \quad & \tau \tag{\theequation} \\
\text{s.t.} \hspace{1.7em}
& W_k \left( t_{n,k}^\text{ms} + t_n^{\text{s}} + t_{n,k}^{\text{p}} + t_{n,k}^{\text{tr}}(d_n^{\text{mt}}) \right) \le \tau, \notag \\
& \hspace{1.3in} \forall k \in \mathcal{K}, \forall n \in \mathcal{S}_k(\mu), \label{eq:sub3_tau} \\
& t_{n,k}^{\text{p}} \ge t_{n,k}^\text{c}, \ t_{n,k}^{\text{p}} \ge t_n^{\text{mt}}, \quad \forall n \in \mathcal{S}_k(\mu), \label{eq:sub3_tp} \\
& d_n^{\text{mt}} \le v_n^{\max} t_n^{\text{mt}}, \quad \forall n \in \mathcal{S}_k(\mu), \label{eq:sub3_vmax} \\
& \lambda_1 d_n^{\text{mt}} + \lambda_2 \frac{(d_n^{\text{mt}})^2}{t_n^{\text{mt}}} + p_n t_{n,k}^{\text{tr}}(d_n^{\text{mt}}) \le E_n^{\text{rc}}, \notag \\
& \hspace{1.8in} \forall n \in \mathcal{S}_k(\mu), \label{eq:sub3_energy} \\
& d_n^{\text{mt}} \ge 0, \ t_n^{\text{mt}} \ge 0, \quad \forall n \in \mathcal{S}_k(\mu),
\end{align}
\end{subequations}
where $E_n^{\text{rc}} = E_n^{\max} - (E_{n,k}^{\text{ms}} + E_n^{\text{s}} + E_{n,k}^{\text{c}})$ represents the energy budget consumed by the fixed sensing and computation phases. Constraint \eqref{eq:sub3_vmax} equivalently maintains the mechanical speed limit. Since all constraints are convex and the objective is linear, problem \eqref{eq.sub3} can be efficiently solved using interior-point solvers. Upon obtaining the optimal $(d_n^{\text{mt}})^*$ and $(t_n^{\text{mt}})^*$, the physical cruising speed is recovered as $v_n^{\text{mt}} = (d_n^{\text{mt}})^* / (t_n^{\text{mt}})^*$.

\subsection{Algorithm Analysis}
The detailed procedures of the proposed hierarchical algorithm are summarized in Algorithm \ref{alg:overall}.

\begin{algorithm}[t]
\caption{Hierarchical Topology Dispatching and Resource Allocation Algorithm for DT Synchronization}
\label{alg:overall}
\begin{algorithmic}[1]
\renewcommand{\algorithmicrequire}{\textbf{Input:}}
\renewcommand{\algorithmicensure}{\textbf{Output:}}
\REQUIRE System parameters; tolerance $\epsilon$.
\STATE \textbf{Initialization:} 
\STATE Generate the initial full-deployment topology $\mu^{(0)}$ via Hungarian-based phase I and greedy-based phase II.
\STATE Initialize the continuous resource set $\mathcal{X}_{\text{con}}^{(0)}$, evaluate the initial system cost $J(\mu^{(0)})$, and set iteration index $j=0$.
\REPEAT
    \STATE Identify the bottleneck region $k^*$, the straggler agent $n^*$, and the donor region $k_{\text{donor}}$ to construct the high-potential candidate set $\Omega(\mu^{(j)})$ via \eqref{eq:omega_set}.
    \STATE Initialize the temporary best state: $\mu_{\text{best}} = \mu^{(j)}$ and $J_{\text{best}} = J(\mu^{(j)})$.
    \FOR{each candidate state $\hat{\mu} \in \Omega(\mu^{(j)})$}
        \STATE \textbf{if} $E_{\text{req}}^{\min} > a E_n^{\max}$ \textbf{then continue};
        \STATE Initialize BCD iteration index $i = 0$ and reference points $\{ t_n^{\text{s},(0)}, z_n^{(0)} \}$.
        \REPEAT
            \STATE Update cruising speed $(v_n^{\text{ms}})^*$ via the closed-form solution in Theorem \ref{theo:vms}.
            \STATE Fix other variables, update $\{ \mathbf{t}^\text{s}, \mathbf{t}^{\text{c}}, \mathbf{L}, \mathbf{z} \}^{(i+1)}$ by solving convex problem \eqref{eq.sub1}.
            \STATE Fix other variables, update $\{ \mathbf{b}, \mathbf{p}, \mathbf{R} \}^{(i+1)}$ by solving the convex problem \eqref{eq.sub2}.
            \STATE Fix other variables, update $\{ \mathbf{d}^{\text{mt}}, \mathbf{t}^\text{mt} \}^{(i+1)}$ by solving the convex problem \eqref{eq.sub3}.
            \STATE Update the SCA local reference points: $\{ t_n^{\text{s},(i+1)}, z_n^{(i+1)} \} \leftarrow \{ \mathbf{t}^\text{s}, \mathbf{z} \}^{(i+1)}$.
            \STATE $i \leftarrow i + 1$.
        \UNTIL{Convergence of $\tau$ within tolerance $\epsilon$.}
        \STATE \textbf{if} $\tau < J_{\text{best}}$ \textbf{then} update $J_{\text{best}} = \tau$ and $\mu_{\text{best}} = \hat{\mu}$;
    \ENDFOR
    \STATE Update the topology state $\mu^{(j+1)} \leftarrow \mu_{\text{best}}$ and $j \leftarrow j+1$.
\UNTIL{$J(\mu^{(j)}) = J(\mu^{(j-1)})$.}
\ENSURE Optimal binary assignment matrix $\boldsymbol{\alpha}^*$ and the multidimensional continuous resource set $\mathcal{X}_{\text{con}}^*$.
\end{algorithmic}
\end{algorithm}

\subsubsection{Convergence Analysis}\label{sec.ca}
The convergence of Algorithm \ref{alg:overall} is established by analyzing the inner-layer resource allocation and the outer-layer topology matching separately.

For any given topology $\mu^{(j)}$, let $\tau^{(i)}$ denote the objective value at the $i$-th inner-layer iteration. In the sensing and computation block, the successive convex approximation (SCA)-based subproblem \eqref{eq.sub1} utilizes global concave lower bounds and convex upper bounds that are tight at the local reference point. According to the properties of the block successive upper-bound minimization (BSUM) framework, solving \eqref{eq.sub1} ensures $\tau\left(\mathcal{X}_{\text{sc}}^{(i+1)}, \mathcal{X}_{\text{comm}}^{(i)}, \mathcal{X}_{\text{ctrl}}^{(i)}\right) \le \tau^{(i)}$. Since the communication and control subproblems \eqref{eq.sub2} and \eqref{eq.sub3} are solved exactly via convex optimization, we obtain a non-increasing sequence:
\begin{equation}
    \tau^{(i+1)} \le \tau\left(\mathcal{X}_{\text{sc}}^{(i+1)}, \mathcal{X}_{\text{comm}}^{(i+1)}, \mathcal{X}_{\text{ctrl}}^{(i+1)}\right) \le \tau^{(i)}.
\end{equation}
Given that the objective $\tau$ is lower-bounded by physical hardware limits, the sequence $\{\tau^{(i)}\}$ is guaranteed to converge to a stationary point of the inner-layer problem \eqref{eq.inner}.

The outer-layer manages the discrete assignment mapping $\mu: \mathcal{N} \rightarrow \mathcal{K}'$. The total number of possible matching states is finite, specifically $|\mathcal{M}| = (K+1)^N$. According to the best-response update rule in Algorithm \ref{alg:overall}, a topological operation is accepted if and only if it strictly reduces the system cost, i.e., $J(\mu^{(j+1)}) < J(\mu^{(j)})$. This strict descent property ensures that no matching state is revisited, preventing any cyclic behavior. Since the state space is finite and the sequence of system costs is strictly decreasing, the outer-layer algorithm is guaranteed to converge to a locally stable matching within a finite number of iterations.

\subsubsection{Complexity Analysis}
The computational complexity of Algorithm \ref{alg:overall} is fundamentally governed by three stages: initialization, inner-layer optimization, and outer-layer matching.

The bipartite matching and greedy reinforcement take $\mathcal{O}(N^3)$ and $\mathcal{O}((N-K)K)$ operations, respectively. Thus, the overall initialization complexity is bounded by $\mathcal{O}(N^3)$. 

For a given topology, optimizing the continuous resources involves solving the convex subproblems \eqref{eq.sub1}, \eqref{eq.sub2}, and \eqref{eq.sub3} via interior-point methods, which incurs a worst-case polynomial complexity of $\mathcal{O}(N^{3.5})$. Considering the closed-form kinematic solution \eqref{eq:vms_closed_form} adds negligible overhead, the evaluation complexity per candidate state is $\mathcal{O}(I_{\text{bcd}} N^{3.5})$, where $I_{\text{bcd}}$ is the number of BCD iterations.

At each iteration, the heuristic state generation \eqref{eq:omega_set} and the outer-layer energy pruning constrain the number of active BCD evaluations to a small constant $\tilde{C}_{\Omega}$. Assuming $I_{\text{out}}$ outer-layer iterations, the cumulative exploration complexity is $\mathcal{O}(I_{\text{out}} \tilde{C}_{\Omega} I_{\text{bcd}} N^{3.5})$.

Synthesizing these stages, the total computational complexity of the proposed framework is evaluated as $\mathcal{O} \left( N^3 + I_{\text{out}} \tilde{C}_{\Omega} I_{\text{bcd}} N^{3.5} \right)$.
Compared to the exponentially explosive exhaustive search $\mathcal{O}\left((K+1)^N\right)$ required by the original MINLP formulation, the proposed algorithm strategically reduces the computational overhead to a tractable polynomial time.

\section{Simulation Results and Analysis}\label{Sec:sra}
In this section, we present comprehensive simulation results to evaluate the performance of the proposed hierarchical algorithm. We consider a MEAN scenario in which $12$ mobile agents are randomly deployed within a $200 \times 200~\text{m}^2$ square area to cooperatively monitor $4$ target regions, with the BS located at the area center. For each scenario, the region weights $W_k$ and the sensing accuracy thresholds $\Theta_k^{\text{th}}$ are randomly drawn from $[0.5, 2.0]$ and $[0.65, 0.78]$, respectively. Unless otherwise specified, the key system parameters adopted in the simulations are summarized in Table~\ref{tab:params}, and all reported results are averaged over five independent random scenarios.

\begin{table}[t]
\caption{Main System Parameters}
\label{tab:params}
\centering
\renewcommand{\arraystretch}{1.15}
\begin{tabular}{lc}
\toprule
\textbf{Parameter} & \textbf{Value} \\
\midrule
Number of agents $N$ & $12$ \\
Number of regions $K$ & $4$ \\
Total bandwidth $B_{\text{tot}}$ & \SI{10}{MHz} \cite{you2016energy}  \\
Maximum velocity $v^{\max}_n$ & \SI{4}{m/s} \\
Maximum transmit power $P^{\max}_n$ & \SI{30}{dBm}  \\
Computation complexity $\eta$ & \SI{50}{cycles/bit} \cite{you2016energy}  \\
Path loss exponent $\delta$ & $4.0$ \cite{3GPP_TR_38901} \\
Channel reference gain $\beta_0$ & $10^{-5}$ \\
Cooperation exponent $\theta$ & $0.9$ \\
Per-agent energy budget $E_n^{\max}$ & \SI{220}{J}  \\
\bottomrule
\end{tabular}
\end{table}

\subsection{Inner-Layer Performance Analysis}

\begin{figure*}[t]
    \centering
    \vspace{-2em}
    \subfigure{
        \begin{minipage}{0.32\textwidth}
            \centering
            \includegraphics[width=1\textwidth]{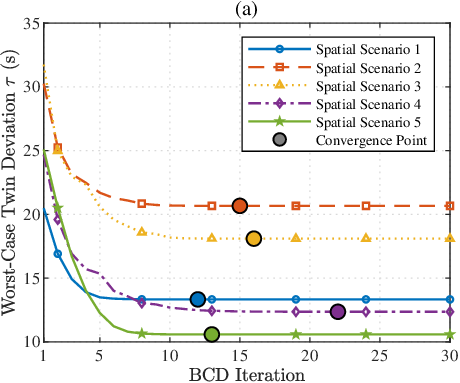}
            \label{fig.convergence}
    \end{minipage}}
    \hspace{-2mm}
    \subfigure{
        \begin{minipage}{0.32\textwidth}
            \centering
            \includegraphics[width=1\textwidth]{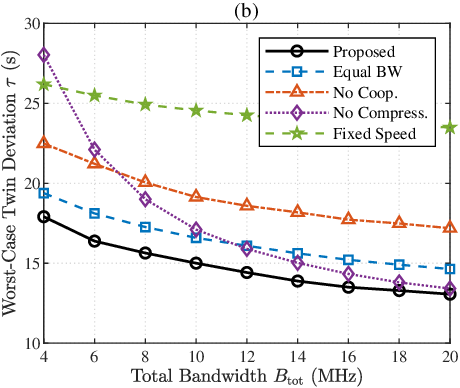}
            \label{fig.btot}	
    \end{minipage}}
        \hspace{-2mm}
        \subfigure{
            \begin{minipage}{0.32\textwidth}
                \centering
                \includegraphics[width=1\textwidth]{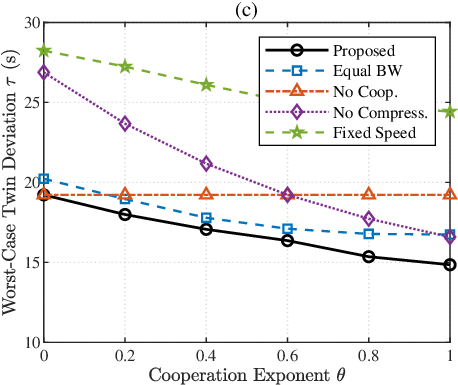}
                \label{fig.theta}	
        \end{minipage}}
        \\
        \vspace{-1.4em}
        \hspace{0.1em}
    \subfigure{
        \begin{minipage}{0.32\textwidth}
            \centering
            \includegraphics[width=1\textwidth]{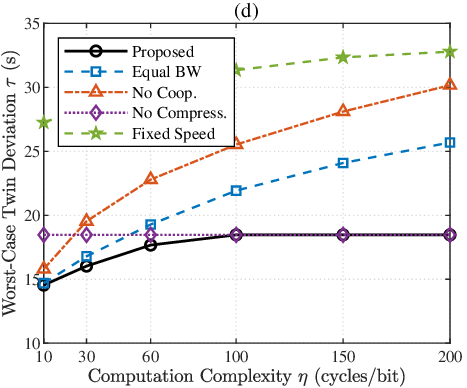}
            \label{fig.eta}
    \end{minipage}}
    \hspace{-2mm}
    \subfigure{
        \begin{minipage}{0.32\textwidth}
            \centering
            \includegraphics[width=1\textwidth]{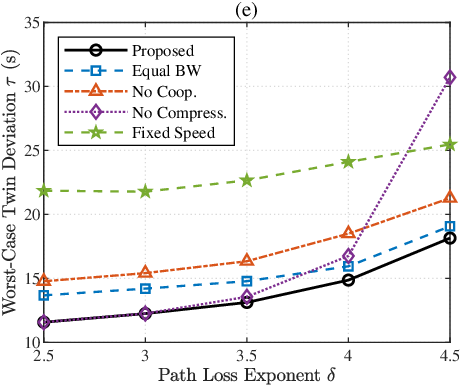}
            \label{fig.delta}	
    \end{minipage}}
        \hspace{-2mm}
        \subfigure{
            \begin{minipage}{0.32\textwidth}
                \centering
                \includegraphics[width=1\textwidth]{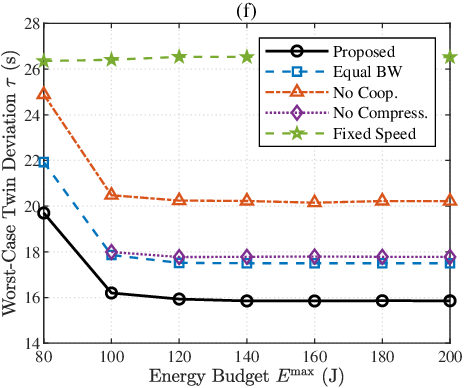}
                \label{fig.emax}	
        \end{minipage}}
    \vspace{-1.4em}
    \caption{Worst-case twin deviation versus: (a) BCD iteration, (b) total bandwidth $B_{\text{tot}}$, (c) cooperation exponent $\theta$, (d) computation complexity $\eta$, (e) path loss exponent $\delta$, (f) energy budget per agent $E^{\max}$.}
    \label{fig.BCD}
    \vspace{-1em}
\end{figure*}

We compare the proposed scheme against the following four inner-layer baseline schemes, all of which share the same outer matching solution $\boldsymbol{\alpha}^*$:
\begin{itemize}
\item ``\textbf{Equal BW}'': This scheme allocates bandwidth equally to each agent as $b_n = B_{\text{tot}}/N$.
\item ``\textbf{No Coop.}'': This scheme disables the cooperation gain modeling by setting $g(N_k) = 1$ regardless of the number of agents assigned to each region.
\item ``\textbf{No Compress.}'': This scheme fixes the semantic compression ratio at $\rho_n = 1$.
\item ``\textbf{Fixed Speed}'': This scheme holds both the move-to-sense and move-to-transmit velocities at $v^{\max}/2$.
\end{itemize}

Fig.~\ref{fig.convergence} illustrates the convergence behavior of the proposed inner BCD algorithm under five distinct initial spatial deployment topologies. All five trajectories monotonically decrease and converge within approximately $15$ iterations under the tolerance $\epsilon = 10^{-3}$. This empirical observation aligns with the theoretical analysis in Section~\ref{sec.ca}.

Fig.~\ref{fig.btot} through Fig.~\ref{fig.emax} present a comprehensive sensitivity analysis of the worst-case twin deviation $\tau$ with respect to five key system parameters, covering the communication, sensing, computation, and energy aspects of the considered MEAN. Across all five sweeps, the proposed scheme consistently attains the lowest $\tau$. Beyond this overall trend, several regime-specific phenomena merit attention. In Fig.~\ref{fig.btot} and Fig.~\ref{fig.delta}, the gap between the proposed scheme and ``No Compress.'' shrinks to less than half a second when bandwidth is abundant or the channel is favorable, but widens to over \SI{12}{s} in the opposite regime. This reveals that compression and channel resources function as substitutable means for transmission delay reduction, and semantic compression becomes critical precisely when the channel resource is constrained. In Fig.~\ref{fig.theta}, the ``No Coop.'' baseline remains constant at \SI{19.22}{s} as expected, since this baseline is structurally insensitive to the cooperation exponent. In Fig.~\ref{fig.eta}, both the proposed scheme and ``No Compress.'' converge to a common plateau of \SI{18.47}{s} for $\eta \geq 100$~cycles/bit, indicating that at high computation complexity the optimal semantic compression ratio of the proposed scheme itself degenerates to unity, since the transmission savings from compression no longer offset the computation delay penalty. Finally, in Fig.~\ref{fig.emax}, two qualitatively distinct regimes emerge. In the binding regime $E_n^{\max} \in [80, 100]$ \si{J} the energy constraint actively limits the BCD optimization, while in the saturation regime $E_n^{\max} \geq 120$~\si{J} all schemes asymptote toward their unconstrained optima. Notably, the ``Fixed Speed'' baseline saturates at approximately \SI{26.50}{s} with little response to the increasing energy budget, whereas other baselines saturate at substantially lower values around $16$ to $20$ seconds. This asymmetric saturation reveals that velocity adaptation is the primary degree of freedom that the BCD optimizer employs to navigate the energy-time trade-off; once velocity is removed from the optimization variables, the system is unable to fully exploit the relaxation of the energy budget. The proposed scheme also remains feasible across all trials throughout the $E_n^{\max}$ sweep, while ``No Compress.'' becomes infeasible in one of the five trials at $E_n^{\max} = 80$ \si{J} and is therefore omitted at that point.

\subsection{Outer-Layer Performance Analysis}
To assess the contribution of the outer matching design, we compare the proposed scheme against the following three outer-layer baseline schemes, all of which replace the outer matching procedure but retain the same inner BCD as the proposed scheme:
\begin{itemize}
\item ``\textbf{Proposed w/o TPI}'': This scheme replaces the two-phase initialization with a uniform random initialization while retaining the same best-response refinement, isolating the contribution of the intelligent initialization.
\item ``\textbf{Distance-Based}'': This scheme solves an integer linear program that minimizes the total move-to-sense distance subject to the coverage constraint.
\item ``\textbf{Random}'': This scheme assigns each agent independently and uniformly to one of $K+1$ choices with a coverage repair step, serving as an unstructured reference.
\end{itemize}

\begin{figure}
    \centering
    \includegraphics[width=0.8\linewidth]{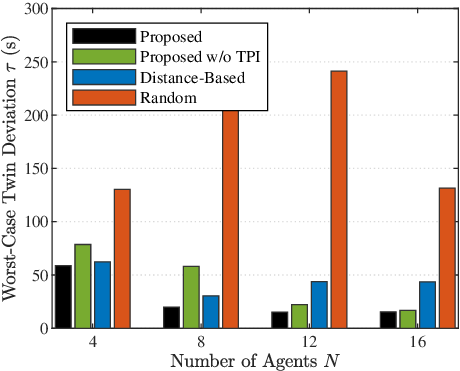}
    \caption{Worst-case twin deviation versus number of agents across diverse matching schemes.}
    \label{fig.N}
    \vspace{-1em}
\end{figure}

Fig.~\ref{fig.N} reports the worst-case twin deviation $\tau$ versus the number of agents $N$. As $N$ grows from $4$ to $16$, the proposed scheme exhibits a monotonically decreasing trend that saturates for $N \geq 12$. This phenomenon can be attributed to the diminishing marginal benefit of additional agents, given the limitation of the total bandwidth. The gap of roughly \SI{28}{s} between the proposed scheme and ``Distance-Based'' for $N \geq 12$ quantifies the cost of ignoring the physical volatility rate $W_k$ and the cooperation gain, since these two factors are absent from the pure mobility-distance objective adopted by ``Distance-Based''. A more revealing observation concerns the ``Proposed w/o TPI'' curve, where the gap to the proposed scheme shrinks from more than \SI{30}{s} at $N = 8$ to less than \SI{2}{s} at $N = 16$. This trend suggests that the two-phase initialization is most valuable in resource-constrained regimes where the search landscape is rugged, whereas in agent-abundant regimes the best-response refinement alone suffices to drive a random initialization toward a near-optimal solution. The ``Random'' curve exhibits high variance and a non-monotonic mean across the sweep, owing to the large inherent variance of uniform random sampling.

Fig.~\ref{fig.visualization} presents a side-by-side visualization of the matching topologies produced by the four schemes on a common scenario at $N = 12$ and $K = 4$, with the resulting $J^*$ value annotated in each panel. In this scenario, the physical volatility rates of the four regions are $W_1 = 1.83$, $W_2 = 1.68$, $W_3 = 1.78$, and $W_4 = 1.44$. In Fig.~\ref{fig.proposed}, the proposed scheme attains $J^*$ of \SI{35.93}{s} with the assignment distribution $N_k = [1, 2, 2, 6]$ and one void agent. Despite holding the lowest physical volatility rate among the four regions, $R_4$ receives six cooperating agents. This phenomenon can be attributed to the overall geometrical distribution of the agents and regions. 
Fig.~\ref{fig.no_tpi}, ``Proposed w/o TPI'', reaches $J^*$ of \SI{57.21}{s} with a sparser deployment $N_k = [3, 1, 2, 2]$ and four void agents, illustrating that without the two-phase initialization the best-response refinement converges to a locally stable but under-deployed configuration. Fig.~\ref{fig.distance}, ``Distance-Based'', achieves $J^*$ of \SI{93.50}{s} with $N_k = [2, 2, 2, 6]$ and zero voids. This baseline attains a worse $J^*$ than the proposed scheme since ``Distance-Based'' exhaustively deploys all agents, whereas the proposed scheme strategically idles one agent to free its bandwidth and energy budget for redistribution among the active agents. Finally, Fig.~\ref{fig.random}, ``Random'', produces $J^*$ of \SI{123.24}{s} with a visibly imbalanced workload $N_k = [5, 1, 2, 3]$.

\begin{figure}[t]
    \centering
    \subfigure{
        \begin{minipage}{0.236\textwidth}
            \centering
            \includegraphics[width=1\textwidth]{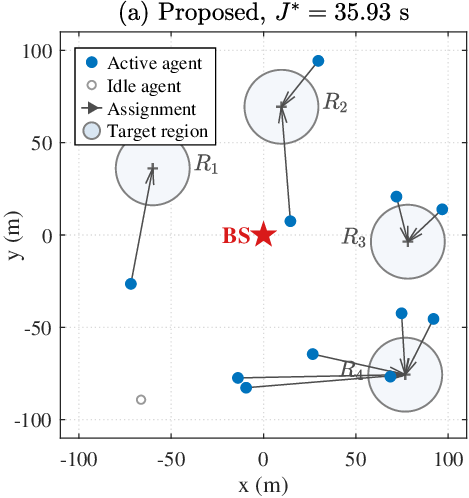}
            \label{fig.proposed}
    \end{minipage}}
    \hspace{-2mm}
        \subfigure{
            \begin{minipage}{0.236\textwidth}
                \centering
                \includegraphics[width=1\textwidth]{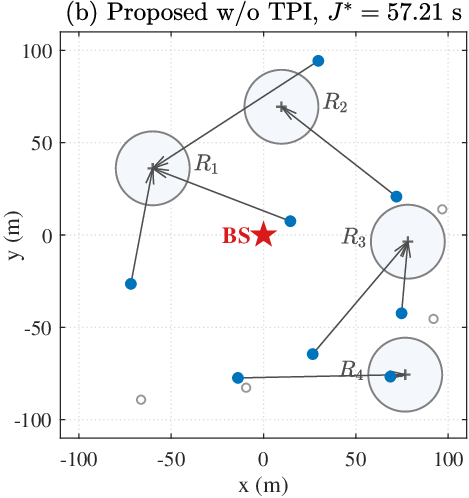}
                \label{fig.no_tpi}	
        \end{minipage}}
        \\
        \vspace{-2em}
    \subfigure{
        \begin{minipage}{0.236\textwidth}
            \centering
            \includegraphics[width=1\textwidth]{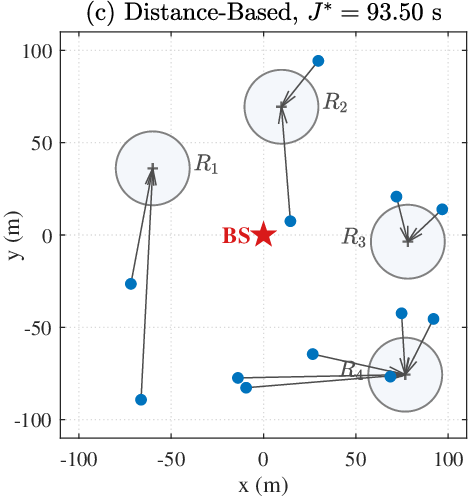}
            \label{fig.distance}
    \end{minipage}}
        \hspace{-2mm}
        \subfigure{
            \begin{minipage}{0.236\textwidth}
                \centering
                \includegraphics[width=1\textwidth]{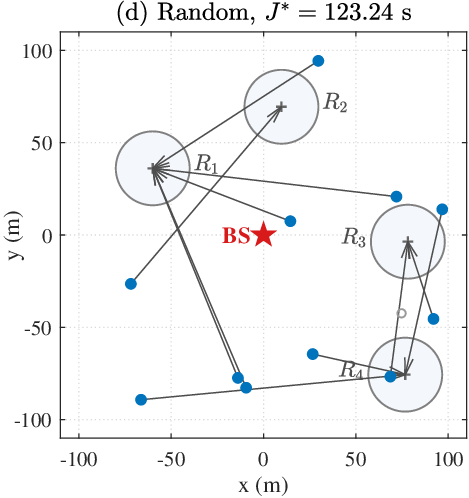}
                \label{fig.random}
        \end{minipage}}
    \vspace{-1.4em}
    \caption{Visualization of the matching topologies across different schemes.}
    \label{fig.visualization}
    \vspace{-1em}
\end{figure}



\section{Conclusion}\label{Sec:c}
In this paper, we have established a comprehensive framework for DT synchronization over agentic AI-empowered MEANs. By investigating the five-stage closed-loop workflow, we transitioned from a static sensing-communication paradigm to a dynamic sensing-moving-computing-transmission synergy. Our analyses reveal two key insights for future agentic networks: semantic compression effectively compensates for constrained wireless bandwidth in highly volatile environments, while autonomous velocity adaptation serves as an essential degree of freedom to navigate the fundamental energy-time trade-off.

Future research will focus on integrating channel knowledge maps to enhance the precision of mobility-gain modeling and extending the coordination to multi-cell scenarios.

\bibliographystyle{IEEEtran}
\bibliography{ref}

@article{ZHAO2024107055,
title = {A joint communication and computation design for semantic wireless communication with probability graph},
journal = {J. Franklin Inst.},
volume = {361},
number = {13},
pages = {107055},
year = {2024},
month={Sep.},
issn = {0016-0032},
doi = {https://doi.org/10.1016/j.jfranklin.2024.107055},
author = {Zhouxiang Zhao and Zhaohui Yang and Xu Gan and Quoc-Viet Pham and Chongwen Huang and Wei Xu and Zhaoyang Zhang},
}

@ARTICLE{10550151,
  author={Zhao, Zhouxiang and Yang, Zhaohui and Huang, Chongwen and Wei, Li and Yang, Qianqian and Zhong, Caijun and Xu, Wei and Zhang, Zhaoyang},
  journal={IEEE Internet Things J.}, 
  title={A Joint Communication and Computation Design for Distributed {RISs} Assisted Probabilistic Semantic Communication in {IIoT}}, 
  year={2024},
  month={Aug.},
  volume={11},
  number={16},
  pages={26568-26579},
  doi={10.1109/JIOT.2024.3409271}}

@ARTICLE{10915662,
  author={Zhao, Zhouxiang and Yang, Zhaohui and Hu, Ye and Zhu, Chen and Shikh-Bahaei, Mohammad and Xu, Wei and Zhang, Zhaoyang and Huang, Kaibin},
  journal={IEEE Trans. Commun.}, 
  title={Compression Ratio Allocation for Probabilistic Semantic Communication With {RSMA}}, 
  year={2025},
  month={Sep.},
  volume={73},
  number={9},
  pages={7304-7318},
  doi={10.1109/TCOMM.2025.3548689}}

@ARTICLE{11006980,
  author={Zhao, Zhouxiang and Yang, Zhaohui and Chen, Mingzhe and Zhu, Chen and Xu, Wei and Zhang, Zhaoyang and Huang, Kaibin},
  journal={IEEE Trans. Wireless Commun.}, 
  title={Energy-Efficient Probabilistic Semantic Communication Over Space-Air-Ground Integrated Networks}, 
  year={2025},
  month={Oct.},
  volume={24},
  number={10},
  pages={8814-8829},
  doi={10.1109/TWC.2025.3569102}}

@ARTICLE{11208653,
  author={Yang, Yinchao and Ding, Yahao and Yang, Zhaohui and Huang, Chongwen and Zhang, Zhaoyang and Niyato, Dusit and Shikh-Bahaei, Mohammad R.},
  journal={IEEE Internet Things J.}, 
  title={Toward Efficient and Privacy-Aware {eHealth} Systems: An Integrated Sensing, Computing, and Semantic Communication Approach}, 
  year={2025},
  month={Dec.},
  volume={12},
  number={24},
  pages={55087-55105},
  doi={10.1109/JIOT.2025.3623776}}

@ARTICLE{1638342,
  author={Yongguo Mei and Yung-Hsiang Lu and Hu, Y.C. and Lee, C.S.G.},
  journal={IEEE Trans. Robot.}, 
  title={Deployment of mobile robots with energy and timing constraints}, 
  year={2006},
  month={Jun.},
  volume={22},
  number={3},
  pages={507-522},
  doi={10.1109/TRO.2006.875494}}

@ARTICLE{10841377,
  author={Wang, Jiaxiang and Yang, Yinchao and Yang, Zhaohui and Huang, Chongwen and Chen, Mingzhe and Zhang, Zhaoyang and Shikh-Bahaei, Mohammad},
  journal={IEEE Trans. Cogn. Commun. Netw.}, 
  title={Generative {AI} Empowered Semantic Feature Multiple Access {(SFMA)} Over Wireless Networks}, 
  year={2025},
  month={Apr.},
  volume={11},
  number={2},
  pages={791-804},
  doi={10.1109/TCCN.2025.3529692}}

@ARTICLE{11370176,
  author={Jiang, Feibo and Pan, Cunhua and Wang, Kezhi and Michiardi, Pietro and Dobre, Octavia A. and Debbah, Merouane},
  journal={IEEE J. Sel. Areas Commun.}, 
  title={From Large {AI} Models to Agentic {AI}: A Tutorial on Future Intelligent Communications}, 
  year={2026},
  volume={},
  number={},
  pages={},
  doi={10.1109/JSAC.2026.3660010}}

@ARTICLE{11339915,
  author={Zhang, Ruichen and Liu, Guangyuan and Liu, Yinqiu and Zhao, Changyuan and Wang, Jiacheng and Xu, Yunting and Niyato, Dusit and Kang, Jiawen and Li, Yonghui and Mao, Shiwen and Sun, Sumei and Shen, Xuemin and Kim, Dong In},
  journal={IEEE Commun. Surveys Tuts.}, 
  title={Toward Edge General Intelligence With Agentic {AI} and Agentification: Concepts, Technologies, and Future Directions}, 
  year={2026},
  volume={28},
  number={},
  pages={4285-4318},
  doi={10.1109/COMST.2026.3651702}}

@ARTICLE{11373363,
  author={Nguyen, Thuan Minh and Truong, Vu Tuan and Le, Long Bao},
  journal={IEEE Internet Things Mag.}, 
  title={Agentic {AI} Meets Edge Computing in Autonomous {UAV} Swarms}, 
  year={2026},
  volume={},
  number={},
  pages={},
  doi={10.1109/MIOT.2025.3645767}}

@ARTICLE{11366905,
  author={Zhao, Changyuan and Liu, Guangyuan and Zhang, Ruichen and Liu, Yinqiu and Wang, Jiacheng and Kang, Jiawen and Niyato, Dusit and Li, Zan and Shen, Xuemin and Han, Zhu and Sun, Sumei and Yuen, Chau and Kim, Dong In},
  journal={IEEE Trans. Cogn. Commun. Netw.}, 
  title={Edge General Intelligence Through World Models, Large Language Models, and Agentic {AI}: Fundamentals, Solutions, and Challenges}, 
  year={2026},
  volume={12},
  number={},
  pages={5649-5675},
  doi={10.1109/TCCN.2026.3658762}}

@ARTICLE{11303308,
  author={Zhang, Ruichen and Tang, Shunpu and Liu, Yinqiu and Niyato, Dusit and Xiong, Zehui and Sun, Sumei and Mao, Shiwen and Han, Zhu},
  journal={IEEE Commun. Mag.}, 
  title={Toward Agentic {AI}: Generative Information Retrieval Inspired Intelligent Communications and Networking}, 
  year={2026},
  month={Jan.},
  volume={64},
  number={1},
  pages={197-204},
  doi={10.1109/MCOM.001.2500073}}

@ARTICLE{11373008,
  author={Du, Baoxia and Li, Ruidong and Niyato, Dusit and Su, Zhou},
  journal={IEEE Netw.}, 
  title={{TaskON}: Task-Oriented Networking for Agentic {AI}}, 
  year={2026},
  volume={},
  number={},
  pages={},
  doi={10.1109/MNET.2026.3657581}}

@ARTICLE{11297177,
  author={Liu, Guangyuan and Liu, Yinqiu and Zhang, Ruichen and Du, Hongyang and Niyato, Dusit and Xiong, Zehui and Sun, Sumei and Jamalipour, Abbas},
  journal={IEEE Commun. Mag.}, 
  title={Wireless Agentic {AI} with Retrieval-Augmented Multimodal Semantic Perception}, 
  year={2026},
  month={Jan.},
  volume={64},
  number={1},
  pages={230-236},
  doi={10.1109/MCOM.001.2500293}}

@ARTICLE{11298134,
  author={Liu, Yinqiu and Liu, Guangyuan and Wang, Jiacheng and Zhang, Ruichen and Niyato, Dusit and Sun, Geng and Xiong, Zehui and Han, Zhu},
  journal={IEEE J. Sel. Areas Commun.}, 
  title={{LAMeTA}: Intent-Aware Agentic Network Optimization via a Large {AI} Model-Empowered Two-Stage Approach}, 
  year={2025},
  volume={},
  number={},
  pages={},
  doi={10.1109/JSAC.2025.3642840}}

@ARTICLE{11207697,
  author={Li, Haiyuan and Madhukumar, Hari and Methley, Nicholas and Chen, Xuewen and Wu, Yulei and Parra-Ullauri, Juan and Sharma, Vishnu and Lee, Jeongran and Koblitz, Arndt Ryo and Andrews, Matthew and Liu, Sige and Deng, Yansha and Kolawole, Oluwatayo Y. and Tassi, Andrea and Simeonidou, Dimitra},
  journal={IEEE Internet Things J.}, 
  title={Future Factories with {6G}: Agentic {AI} and Cyber-Physical Digital Twins}, 
  year={2025},
  volume={},
  number={},
  pages={},
  doi={10.1109/JIOT.2025.3623075}}

@ARTICLE{11207716,
  author={Wang, Yuntao and Guo, Shaolong and Pan, Yanghe and Su, Zhou and Chen, Fahao and Luan, Tom H. and Li, Peng and Kang, Jiawen and Niyato, Dusit},
  journal={IEEE Trans. Cogn. Commun. Netw.}, 
  title={Internet of Agents: Fundamentals, Applications, and Challenges}, 
  year={2026},
  volume={12},
  number={},
  pages={4476-4501},
  doi={10.1109/TCCN.2025.3623369}}

@ARTICLE{11303197,
  author={Liu, Ziheng and Zhang, Jiayi and Zhu, Yiyang and Shi, Enyu and Xu, Bokai and Niyato, Dusit and Jin, Shi and Ai, Bo},
  journal={IEEE J. Sel. Areas Commun.}, 
  title={{MaLAM4Com}: Multi-Agent Cooperative Large {AI} Models for Wireless Communications}, 
  year={2025},
  volume={},
  number={},
  pages={},
  doi={10.1109/JSAC.2025.3645753}}

@ARTICLE{11022699,
  author={Wang, Yuntao and Pan, Yanghe and Su, Zhou and Deng, Yi and Zhao, Quan and Du, Linkang and Luan, Tom H. and Kang, Jiawen and Niyato, Dusit},
  journal={IEEE Commun. Surveys Tuts.}, 
  title={Large Model-Based Agents: State-of-the-Art, Cooperation Paradigms, Security and Privacy, and Future Trends}, 
  year={2026},
  volume={28},
  number={},
  pages={1906-1949},
  doi={10.1109/COMST.2025.3576176}}

@ARTICLE{9899718,
  author={Mihai, Stefan and Yaqoob, Mahnoor and Hung, Dang V. and Davis, William and Towakel, Praveer and Raza, Mohsin and Karamanoglu, Mehmet and Barn, Balbir and Shetve, Dattaprasad and Prasad, Raja V. and Venkataraman, Hrishikesh and Trestian, Ramona and Nguyen, Huan X.},
  journal={IEEE Commun. Surveys Tuts.}, 
  title={Digital Twins: A Survey on Enabling Technologies, Challenges, Trends and Future Prospects}, 
  year={2022},
  month={4th Quart.},
  volume={24},
  number={4},
  pages={2255-2291},
  doi={10.1109/COMST.2022.3208773}}

@INPROCEEDINGS{zhang2025transferring,
  author={Zhang, Zifan and Fang, Minghong and Chen, Mingzhe and Liu, Yuchen},
  booktitle={Proc. 2025 Int. Conf. Model. Anal. Simu. Wireless Mobile Syst. (MSWiM)}, 
  title={On Transferring, Merging, and Splitting Task-Oriented Network Digital Twins}, 
  year={2025},
  month={Oct.},
  volume={},
  number={},
  pages={695-704},
  doi={10.1109/MSWiM67937.2025.11309199}}

@ARTICLE{yu2025optimizing,
  author={Yu, Hanzhi and Liu, Yuchen and Yang, Zhaohui and Sun, Haijian and Chen, Mingzhe},
  journal={IEEE Internet Things J.}, 
  title={Optimizing Wireless Resource Management and Synchronization in Digital Twin Networks}, 
  year={2025},
  month={Aug.},
  volume={12},
  number={15},
  pages={29152-29163},
  doi={10.1109/JIOT.2025.3543126}}

@ARTICLE{10070572,
  author={Zheng, Jinkai and Luan, Tom H. and Zhang, Yao and Li, Rui and Hui, Yilong and Gao, Longxiang and Dong, Mianxiong},
  journal={IEEE Trans. Wireless Commun.}, 
  title={Data Synchronization in Vehicular Digital Twin Network: A Game Theoretic Approach}, 
  year={2023},
  month={Nov.},
  volume={22},
  number={11},
  pages={7635-7647},
  doi={10.1109/TWC.2023.3254158}}

@ARTICLE{li2025delay,
  author={Li, Bin and Cai, Haichen and Liu, Lei and Fei, Zesong},
  journal={IEEE Trans. Veh. Technol.}, 
  title={Delay-Aware Digital Twin Synchronization in Mobile Edge Networks With Semantic Communications}, 
  year={2025},
  month={Jul.},
  volume={74},
  number={7},
  pages={10974-10983},
  doi={10.1109/TVT.2025.3548844}}

@INPROCEEDINGS{cakir2023synchronize,
  author={Cakir, Lal Verda and Al-Shareeda, Sarah and Oktug, Sema F. and Özdem, Mehmet and Broadbent, Matthew and Canberk, Berk},
  booktitle={Proc. 2023 IEEE 28th Int. Workshop Comput. Aided Model. Design Commun. Links Netw. (CAMAD)}, 
  title={How to synchronize Digital Twins? A Communication Performance Analysis}, 
  year={2023},
  month={Nov.},
  volume={},
  number={},
  pages={123-127},
  doi={10.1109/CAMAD59638.2023.10478422}}

@ARTICLE{10530992,
  author={Tang, Jianhang and Nie, Jiangtian and Bai, Jingpan and Xu, Ji and Li, Shaobo and Zhang, Yang and Yuan, Yanli},
  journal={IEEE Internet Things J.}, 
  title={{UAV}-Assisted Digital-Twin Synchronization With Tiny-Machine-Learning-Based Semantic Communications}, 
  year={2024},
  month={Sep.},
  volume={11},
  number={17},
  pages={28437-28451},
  doi={10.1109/JIOT.2024.3401229}}

@ARTICLE{elloumi2025spectrum,
  author={Elloumi, Mohamed and Hassan, Md. Zoheb and Kaddoum, Georges},
  journal={IEEE Trans. Netw. Service Manage.}, 
  title={Spectrum Sharing in Internet-of-Vehicles Networks: Digital Twin-Empowered Proactive Interference Management Approach}, 
  year={2025},
  month={Aug.},
  volume={22},
  number={4},
  pages={3228-3248},
  doi={10.1109/TNSM.2025.3541977}}

@ARTICLE{liu2024two,
  author={Liu, Wenshuai and Fu, Yaru and Guo, Yongna and Lee Wang, Fu and Sun, Wen and Zhang, Yan},
  journal={IEEE Trans. Wireless Commun.}, 
  title={Two-Timescale Synchronization and Migration for Digital Twin Networks: A Multi-Agent Deep Reinforcement Learning Approach}, 
  year={2024},
  month={Nov.},
  volume={23},
  number={11},
  pages={17294-17309},
  doi={10.1109/TWC.2024.3452689}}

@ARTICLE{9921194,
  author={Lee, Joash and Cheng, Yanyu and Niyato, Dusit and Guan, Yong Liang and González G., David},
  journal={IEEE Trans. Veh. Technol.}, 
  title={Intelligent Resource Allocation in Joint Radar-Communication With Graph Neural Networks}, 
  year={2022},
  month={Oct.},
  volume={71},
  number={10},
  pages={11120-11135},
  doi={10.1109/TVT.2022.3187377}}

@ARTICLE{9380899,
  author={Yates, Roy D. and Sun, Yin and Brown, D. Richard and Kaul, Sanjit K. and Modiano, Eytan and Ulukus, Sennur},
  journal={IEEE J. Sel. Areas Commun.}, 
  title={Age of Information: An Introduction and Survey}, 
  year={2021},
  month={May},
  volume={39},
  number={5},
  pages={1183-1210},
  doi={10.1109/JSAC.2021.3065072}}

@INPROCEEDINGS{guo2024age,
  author={Guo, Yongna and Fu, Yaru and Zhang, Yan and Quek, Tony Q. S.},
  booktitle={Proc. GLOBECOM 2024 - 2024 IEEE Global Commun. Conf.}, 
  title={Age-of-Information and Energy Optimization in Digital Twin Edge Networks}, 
  year={2024},
  month={Dec.},
  volume={},
  number={},
  pages={3194-3200},
  doi={10.1109/GLOBECOM52923.2024.10901362}}

@INPROCEEDINGS{10000964,
  author={Shu, Yiling and Wang, Zhao and Liao, Haijun and Zhou, Zhenyu and Nasser, Nidal and Imran, Muhammad},
  booktitle={Proc. GLOBECOM 2022 - 2022 IEEE Global Commun. Conf.}, 
  title={Age-of-Information-Aware Digital Twin Assisted Resource Management for Distributed Energy Scheduling}, 
  year={2022},
  month={Dec.},
  volume={},
  number={},
  pages={5705-5710},
  doi={10.1109/GLOBECOM48099.2022.10000964}}

@ARTICLE{11183623,
  author={Chen, Xiangyi and Xing, Huanlai and Zhang, Qinnan and Xiong, Zehui and Bi, Yuanguo and Wang, Xingwei},
  journal={IEEE Trans. Netw. Sci. Eng.}, 
  title={Age-of-Information-Aware Hierarchical Collaborative Inference in Digital Twin-Enabled Vehicular Edge Computing}, 
  year={2026},
  volume={13},
  number={},
  pages={2388-2404},
  doi={10.1109/TNSE.2025.3615533}}

@ARTICLE{11004606,
  author={Noroozi, Kiana and Todd, Terence D. and Zhao, Dongmei and Karakostas, George},
  journal={IEEE Trans. Veh. Technol.}, 
  title={Age of Information in Digital Twin Migration}, 
  year={2025},
  month={Oct.},
  volume={74},
  number={10},
  pages={16281-16294},
  doi={10.1109/TVT.2025.3570505}}

@ARTICLE{liu2025joint,
  author={Liu, Zechen and Liu, Xin and Yang, Wenyi and Zhang, Xueyan},
  journal={IEEE Trans. Wireless Commun.}, 
  title={Joint Sensing and Age of Information Optimization for Energy Constrained {UAV}-Assisted Integrated Sensing, Calculation, and Communication}, 
  year={2025},
  month={May},
  volume={24},
  number={5},
  pages={4440-4453},
  doi={10.1109/TWC.2025.3539108}}

@ARTICLE{liao2024integration,
  author={Liao, Haijun and Lu, Jiaxuan and Shu, Yiling and Zhou, Zhenyu and Tariq, Muhammad and Mumtaz, Shahid},
  journal={IEEE J. Sel. Topics Signal Process.}, 
  title={Integration of {6G} Signal Processing, Communication, and Computing Based on Information Timeliness-Aware Digital Twin}, 
  year={2024},
  month={Jan.},
  volume={18},
  number={1},
  pages={98-108},
  doi={10.1109/JSTSP.2023.3341353}}

@ARTICLE{10460140,
  author={Tang, Lun and Cheng, Zhangchao and Dai, Jun and Zhang, Hongpeng and Chen, Qianbin},
  journal={IEEE Trans. Veh. Technol.}, 
  title={Joint Optimization of Vehicular Sensing and Vehicle Digital Twins Deployment for {DT}-Assisted {IoVs}}, 
  year={2024},
  month={Aug.},
  volume={73},
  number={8},
  pages={11834-11847},
  doi={10.1109/TVT.2024.3373175}}

@INPROCEEDINGS{9812038,
  author={Xu, Runsheng and Xiang, Hao and Xia, Xin and Han, Xu and Li, Jinlong and Ma, Jiaqi},
  booktitle={Proc. 2022 Int. Conf. Robot. Auto. (ICRA)}, 
  title={{OPV2V}: An Open Benchmark Dataset and Fusion Pipeline for Perception with Vehicle-to-Vehicle Communication}, 
  year={2022},
  month={May},
  volume={},
  number={},
  pages={2583-2589},
  doi={10.1109/ICRA46639.2022.9812038}}

@inproceedings{yang2023spatio,
  title={Spatio-temporal domain awareness for multi-agent collaborative perception},
  author={Yang, Kun and Yang, Dingkang and Zhang, Jingyu and Li, Mingcheng and Liu, Yang and Liu, Jing and Wang, Hanqi and Sun, Peng and Song, Liang},
  booktitle={Proc. IEEE/CVF Int. Conf. Compu. Vis. (ICCV)},
  pages={23383--23392},
  year={2023}
}

@article{Bonilla2016MDA,
  author  = {{Bonilla Licea}, Daniel and Ghogho, Mounir and McLernon, Des and Zaidi, Syed Ali Raza},
  title   = {Mobility Diversity-Assisted Wireless Communication for Mobile Robots},
  journal = {IEEE Trans. Robot.},
  volume  = {32},
  number  = {1},
  pages   = {214--229},
  year    = {2016},
  month   = feb,
  doi     = {10.1109/TRO.2015.2513745}
}

@article{Miyagusuku2018GPPathLoss,
  author  = {Renato Miyagusuku and Atsushi Yamashita and Hajime Asama},
  title   = {Precise and accurate wireless signal strength mappings using Gaussian processes and path loss models},
  journal = {Robot. Auto. Systems},
  volume  = {103},
  pages   = {134--150},
  year    = {2018},
  doi     = {10.1016/j.robot.2018.02.011}
}

@ARTICLE{11414114,
  author={Zhou, Kequan and Zhang, Guangyi and Li, Hanlei and Cai, Yunlong and Liu, Shengli and Yu, Guanding},
  journal={IEEE Trans. Commun.}, 
  title={{F4-CKM}: Learning Channel Knowledge Map With Radio Frequency Radiance Field Rendering}, 
  year={2026},
  volume={74},
  number={},
  pages={5684-5700},
  doi={10.1109/TCOMM.2026.3668162}}

@INPROCEEDINGS{9879243,
  author={Yu, Haibao and Luo, Yizhen and Shu, Mao and Huo, Yiyi and Yang, Zebang and Shi, Yifeng and Guo, Zhenglong and Li, Hanyu and Hu, Xing and Yuan, Jirui and Nie, Zaiqing},
  booktitle={Proc. IEEE/CVF Conf. Comput. Vis. Pattern Recog. (CVPR)}, 
  title={{DAIR-V2X}: A Large-Scale Dataset for Vehicle-Infrastructure Cooperative {3D} Object Detection}, 
  year={2022},
  volume={},
  number={},
  pages={21329-21338},
  doi={10.1109/CVPR52688.2022.02067}}

@inproceedings{xu2022v2x,
  title={V2x-vit: Vehicle-to-everything cooperative perception with vision transformer},
  author={Xu, Runsheng and Xiang, Hao and Tu, Zhengzhong and Xia, Xin and Yang, Ming-Hsuan and Ma, Jiaqi},
  booktitle={Proc. Eur. Conf. Comput. Vis. (ECCV)},
  pages={107--124},
  year={2022}
}

@article{zhao2026agentic,
  title={Agentic {AI}-Empowered Wireless Agent Networks With Semantic-Aware Collaboration via {ILAC}},
  author={Zhao, Zhouxiang and Wang, Jiaxiang and Yang, Zhaohui and Yang, Kun and Zhang, Zhaoyang and Chen, Mingzhe and Huang, Kaibin},
  journal={arXiv preprint arXiv:2604.02381},
  year={2026}
}

@techreport{3GPP_TR_38901,
  author       = {{3GPP}},
  title        = {Study on Channel Model for Frequencies from 0.5 to 100 {GHz}},
  institution  = {3rd Generation Partnership Project (3GPP)},
  number       = {TR 38.901, version 17.0.0},
  year         = {2022},
  month        = {Mar.},
  type         = {Tech. Rep.}
}

@article{you2016energy,
  title={Energy-efficient resource allocation for mobile-edge computation offloading},
  author={You, Changsheng and Huang, Kaibin and Chae, Hyukjin and Kim, Byoung-Hoon},
  journal={IEEE Trans. Wireless Commun.},
  volume={16},
  number={3},
  pages={1397--1411},
  year={2016},
  month={Mar.},
  publisher={IEEE}
}

@article{wu2026joint,
  title={Joint Communication and Computation Design for Mobile Embodied {AI} Network {(MEAN)}},
  author={Wu, Chenliang and Zhao, Zhouxiang and Wang, Jiaxiang and Xu, Ruopeng and Zhu, Chen and Yang, Zhaohui and Zhang, Zhaoyang},
  journal={arXiv preprint arXiv:2605.14300},
  year={2026}
}

@article{chen2026split,
  title={Split and Aggregation Learning for Foundation Models Over Mobile Embodied {AI} Network {(MEAN)}: A Comprehensive Survey},
  author={Chen, Qianzhou and Sun, Siqi and Xu, Minrui and Ji, Sijie and Kang, Jiawen and Mao, Yijie and Zhao, Zhouxiang and Yang, Zhaohui and Niyato, Dusit},
  journal={arXiv preprint arXiv:2605.00970},
  year={2026}
}

@article{zhao2025agentic,
  title={Agentic AI for Low-Altitude Semantic Wireless Networks: An Energy Efficient Design},
  author={Zhao, Zhouxiang and Yi, Ran and Cang, Yihan and Jin, Boyang and Yang, Zhaohui and Chen, Mingzhe and Huang, Chongwen and Zhang, Zhaoyang},
  journal={arXiv preprint arXiv:2509.19791},
  year={2025}
}

\end{document}